\newcommand{\R}{\mathbb{R}}
\newtheorem{theorem}{Theorem}
\newtheorem{corollary}{Corollary}
\newtheorem{lemma}{Lemma}
\newtheorem{definition}{Definition}
\numberwithin{equation}{section}
\title{\Huge{Multi-body spherically symmetric steady states of Newtonian self-gravitating elastic matter}}
 \author
 {A.~Alho  \\
           {\small Center for Mathematical Analysis, Geometry and Dynamical Systems}  \\
      {\small Instituto Superior T\'ecnico, Universidade de Lisboa} \\
      {\small Av. Rovisco Pais, 1049-001 Lisboa, Portugal} \\
       {\small  \tt   aalho@math.ist.utl.pt}\\[0.4cm]
       S. Calogero  \\
       {\small Department of Mathematical Sciences}  \\
       {\small Chalmers University of Technology, University of Gothenburg}  \\
       {\small Gothenburg, Sweden} \\
       {\small  \tt  calogero@chalmers.se}
       }
\date{}
\begin{document}
\maketitle
\begin{abstract}
We study the problem of static, spherically symmetric, self-gravitating elastic matter distributions in Newtonian gravity. 
To this purpose we first introduce a new definition of homogeneous, spherically symmetric (hyper)elastic body in Euler coordinates, i.e., in terms of matter fields defined on the current physical state of the body.  We show that our definition is equivalent to the classical one existing in the literature and which is given in Lagrangian coordinates, i.e., in terms of the deformation of the body from a given reference state.
After a number of well-known examples of constitutive functions of elastic bodies are re-defined in our new formulation, a detailed study of the Seth model is presented.  For this type of material the existence of single and multi-body solutions is established.
\end{abstract}
\section{Introduction}

In the Euler formulation of continuum mechanics, the configuration of static, self-gravitating matter distributions in Newtonian gravity is described by the equation
\begin{equation}\label{generaleq}
-\mathrm{Div}\,\sigma+\rho\nabla V=0,\quad V(x)=-\int_{\Omega}\frac{\rho(y)}{|x-y|}\,dy,\quad x\in\Omega,
\end{equation}
where $\sigma=\sigma(x)$ and $\rho=\rho(x)$ are the Cauchy stress tensor and the mass density of the matter, respectively, $V=V(x)$ is the gravitational potential self-induced by the matter distribution and $\Omega\subset\R^3$ is the interior of the matter support. We set $G=1$, where $G$ is Newton's gravitational constant. Equation~\eqref{generaleq} must be complemented by a constitutive equation relating the stress tensor and the mass density of the matter and which depends on the type of material considered. 
 

The constitutive equation $\sigma_{ij}=-f(\rho)\delta_{ij}$ defines a barotropic fluid with equation of state $p=f(\rho)$, where $p$ is the fluid pressure. In this case solutions of~\eqref{generaleq} are necessarily spherically symmetric~\cite{Li,Lin} and the pressure is a decreasing function of the distance from the center of symmetry, which entails that any static, self-gravitating fluid matter distribution with bounded support is {\it a priori} a single ball of matter. The existence of such fluid balls has been established for very general equation of state functions $f$, see~\cite{HU,RR,GR3}. When the matter distribution is made of collisionless kinetic particles (Vlasov matter), the existence of static balls is also well-understood, including their stability~\cite{HUR,LMR,GR3}, while the problem without symmetry restrictions is in large part still open; see~\cite{GR2,Sch} for results on axially symmetric solutions. The existence and stability of static self-gravitating shells of Vlasov matter has been proved in~\cite{GR,As}.

In this paper we consider matter distributions that consist of single or multiple elastic bodies, for which an explicit constitutive equation relating $\sigma$ to $\rho$ is not available in general. In fact elastic bodies are commonly defined using the Lagrangian formulation of continuum mechanics, that is to say, by specifying how $\sigma$ and $\rho$ depend on the deformation $\psi:\mathcal{B}\to\R^3$ of the body from a given reference configuration thereof in which the body occupies the domain $\mathcal{B}\subset\R^3$, see~\cite{ciarlet} and Appendix~\ref{ssg} below for details. One drawback of the Lagrangian formulation is that it generally provides only limited information on the region $\Omega=\psi(\mathcal{B})$ occupied by the matter distribution in its actual physical state and, consequently, on related properties, such as the formation of multiple bodies and the shape of the physical boundary $\partial\Omega$. 
In addition to this, within the applications in astrophysics, where bodies represent stars, planets, etc., the physical interpretation of the ``reference configuration" is questionable, as these bodies are only observable in their current deformed state. 

The above discussion suggests that a pure Euler formulation of the problem of self-gravitating elastic bodies in equilibrium may be desirable. In this paper we derive this formulation in the case of spherically symmetric configurations. More specifically we show in Appendix~\ref{ssg} that under the {\it a priori} assumption that the matter distribution is spherically symmetric, the constitutive equation of a homogeneous (hyper)elastic body can be written in the explicit form $\sigma=\sigma(\delta,\eta)$, where $\delta(r)$ and $\eta(r)$ are dimensionless quantities that measure respectively the density and the local mass of the body in units of the same quantities associated to a given matter distribution with constant density. In Section~\ref{sec2} we use this new form of the constitutive equation to {\it define} homogeneous spherically symmetric (hyper)elastic bodies in physical space. This definition 
is slightly different for elastic balls and elastic shells and is not yet completely independent of the reference state of the body, as it still involves the (constant) density of the material and the inner radius of the shell in the reference state. However we show that these two reference constants can be expressed in terms of physical space parameters. 
In Section~\ref{examples} we present the constitutive equation $\sigma=\sigma(\delta,\eta)$ for some well-known examples of elastic materials which have important applications in physics, chemistry and other sciences. 
In Section~\ref{sethsec} we focus the attention on the Seth elasticity model~\cite{seth}. For this particular elastic matter model we prove the existence of single self-gravitating balls, as well as  the existence of multi-body distributions consisting of an interior ball, or vacuum region, surrounded by an arbitrary number of shells. We choose the Seth model on the one hand because it leads to relatively simple equations and, on the other hand, because we can construct an explicit self-similar solution of the equations for the density $\rho$, which makes it possible to analyze the large radius behavior of general solutions by using the powerful methods of dynamical systems theory. In a sequel of the present paper we will extend our results on self-gravitating elastic balls to more general constitutive equations, including all the examples listed in Section~\ref{examples} below.

We conclude this Introduction by mentioning some works on static, self-gravitating elastic bodies in Newtonian gravity which are related to our results; see~\cite{CH} for a short review on the analogous problem in General Relativity. A survey of applications to planetary objects, including a detailed historical account to the problem, can be found in the recent monograph~\cite{MW16}. A very influential reference in the early stage of the theory is the treatise by Love~\cite{Lov27}. More recently, the existence of single body configurations near the reference state (small deformations) is proved in~\cite{BS} using the implicit function theorem in Lagrangian coordinates. The special case of the latter result for spherically symmetric static shells, in Newtonian and Einsteinian gravity, is discussed in~\cite{LVK}.
For large deformations, the distinction between the reference and the deformed state of the body becomes important and the problem considerably more difficult. In~\cite{tomsimo} this problem was studied using variational calculus methods, but the only property which was possible to prove on the regularity of the physical boundary $\partial\Omega$ of the body was that   $\partial\Omega$ has zero Lebesgue measure.
   To the authors knowledge, the results presented in this paper are the first analytical results on the existence of static, self-gravitating bodies in the non-linear theory elasticity with no restriction on the amount of internal strain of the matter.
\section{Spherically symmetric steady states of self-gravitating elastic matter}\label{sec2}

In spherical symmetry we have {\it a priori} that $\rho=\rho(r)$, where $r=|x|$, and the Cauchy stress tensor has the form 
\[
\sigma_{ij}(x)=-p_\mathrm{rad}(r)\frac{x_ix_j}{r^2}-p_\mathrm{tan}(r)\left(\delta_{ij}-\frac{x_ix_j}{r^2}\right),
\]
where $p_\mathrm{rad}(r)$ and $p_\mathrm{tan}(r)$ are called respectively radial and tangential stress (or pressure). 
Moreover 
\[
\nabla V(r)=\frac{m(r)}{r^2}\frac{x}{r},\quad \text{where}\quad m(r)=4\pi\int_0^r\rho(s)s^2ds
\]
is the mass of the matter distribution enclosed in the ball of radius $r>0$.
We abuse slightly the notation by using the same symbol to denote spherically symmetric functions in Cartesian and spherical coordinates (e.g., $\rho(x)=\rho(r)$). For spherically symmetric configurations~\eqref{generaleq} reduces to
\begin{equation}\label{generalss}
p'_\mathrm{rad}(r)+2\,\frac{p_\mathrm{rad}(r)-p_\mathrm{tan}(r)}{r}+\rho\,\frac{m}{r^2}=0,\quad r\in\Omega,
\end{equation}
where
\[
\Omega=\mathrm{Int}\{r>0:\rho(r)>0\}
\]
is the interior of the matter support. We assume that $p_\mathrm{rad}>0$ and $p_\mathrm{tan}>0$ in $\Omega$; in particular we exclude from our analysis those hypothetical astrophysical objects, e.g., dark-matter stars, which are theorized to have negative interior pressures. Moreover we employ the standard boundary conditions for astrophysical systems surrounded by vacuum that the radial pressure should vanish on the boundary of the matter support~\cite{MW16}; see~\cite{KWW} for other boundary conditions used in astrophysics.

\begin{definition}\label{nbodydef}
A triple $(\rho,p_\mathrm{rad},p_\mathrm{tan})$ is said to be a spherically symmetric, static, self-gravitating $n$-body matter distribution with non-vacuum core if there exist
\[
0= r_0<r_1<\dots< r_{2n-1}
\]
such that
\begin{itemize}
\item[(i)]$\Omega=\cup_{j=1}^{n}I_j$, where $I_j:=(r_{2j-2},r_{2j-1})$,
\item[(ii)] $(\rho,p_\mathrm{rad},p_\mathrm{tan})\in C^1(\Omega)\cap C^0(\overline{\Omega})$ satisfy~\eqref{generalss} in $\Omega$, 
\item[(iii)] $p_\mathrm{rad},p_\mathrm{tan}$ are positive in $\Omega$, 
\item[(iv)] $p_\mathrm{rad}(r_j)=0$, for $j=1,\dots, 2n-1$, and $p_\mathrm{rad}(r_0)>0$,
\item[(v)] $p_\mathrm{rad}(0)=p_\mathrm{tan}(0)$,
\item[(vi)] $\rho(r)=p_\mathrm{tan}(r)=p_\mathrm{tan}(r)=0$, for $r\in [0,\infty)\setminus\overline{\Omega}$.
\end{itemize}
If the conditions $r_0=0$, $p_\mathrm{rad}(r_0)>0$ are replaced with $r_0>0$  and $p_\mathrm{rad}(r_0)=0$, then $(\rho,p_\mathrm{rad},p_\mathrm{tan})$ is said to be a spherically symmetric, static, self-gravitating $n$-body matter distribution with vacuum core. 
The restriction of $(\rho,p_\mathrm{rad},p_\mathrm{tan})$ into the closed interval $\overline{I_j}$, $j=1,\dots, n$, will be called the $j^\mathrm{th}$ body in the matter distribution and denoted by $\mathfrak{B}_j$.
The local mass of $\mathfrak{B}_j$ is
\[
m_j(r)=4\pi\int_{r_{2j-2}}^r\rho(s)\,s^2ds,\quad r\in \overline{I_j},
\] 
and its total mass is
\[
M_j=m(r_{2j-1})-m(r_{2j-2}),\quad j=1,\dots, n.
\]
The total mass of the $n$-body distribution is $M=M_1+\dots +M_n=m(r_{2n-1})$.
\end{definition}
Note that each single body $\mathfrak{B}_1,\dots, \mathfrak{B}_n$ in a multi-body distribution may be made of a different type of material. If $2\leq j\leq n$, then each $\mathfrak{B}_j$ is necessarily a shell of matter, while $\mathfrak{B}_1$ is either a shell, if $r_0>0$, or a ball, if $r_0=0$. By definition we have
\[
(\rho,p_\mathrm{rad},p_\mathrm{tan})=\mathfrak{B}_1+\dots+\mathfrak{B}_n,\quad r\in\overline{\Omega}.
\]

{\it Remark.} The number of bodies in a static, self-gravitating matter distribution may be restricted by the type of material(s) considered. For instance, for perfect fluids the pressure is a decreasing function of the radius and thus static, self-gravitating fluids can only exist in a single ball configuration. When the type of material(s) allows for the existence of multiple bodies, one can construct $n$-body matter distributions by gluing together single body solutions (e.g., a ball with single disjoint shells). 

{\it Remark.} Condition (v) ensures that the Cauchy stress tensor is continuous at the center.

In the remainder of this section we discuss what it means that a body is made of elastic matter. As mentioned in the Introduction, the standard definition of elastic body is given in the Lagrangian formulation of continuum mechanics, i.e., by specifying how $\sigma$ and $\rho$ depend on the deformation of the body from a given reference state, see Appendix~\ref{ssg}. Our purpose is to introduce an alternative definition in Euler coordinates, i.e., in terms of matter fields defined on the actual physical state of the body. We need to distinguish the cases when the body is a ball and a shell, and thus introduce the following two definitions:  
\begin{definition}\label{elasticdef}
Let $\mathfrak{B}=(\rho,p_\mathrm{rad},p_\mathrm{tan})$ be a static, self-gravitating ball supported in $\overline{I}$, where $I\subset(0,\infty)$ is an open, bounded interval such that $0\in\overline{I}$; let $m(r)$ be the local mass of the ball.
Given a constant $\mathcal{K}>0$ and two functions $\widehat{p}_\mathrm{rad},\widehat{p}_\mathrm{tan}:(0,\infty)^2\to\R$ independent of $\mathcal{K}$, we say that $\mathfrak{B}$ is made of homogeneous elastic matter with constitutive functions $\widehat{p}_\mathrm{rad},\widehat{p}_\mathrm{tan}$ and that $\mathfrak{B}$ has reference density $\mathcal{K}>0$ if the radial and tangential pressures have the form
\[
p_\mathrm{rad}(r)=\widehat{p}_\mathrm{rad}(\delta(r),\eta(r)),\quad p_\mathrm{tan}(r)=\widehat{p}_\mathrm{tan}(\delta(r),\eta(r)),
\]
where 
\[
\delta(r)=\frac{\rho(r)}{\mathcal{K}},\quad
\eta(r)=\frac{m(r)}{\tfrac{4\pi}{3} \mathcal{K} r^3},
 \quad r\in I.
\]
\end{definition}
\begin{definition}\label{elasticdefshell}
Let $\mathfrak{B}=(\rho,p_\mathrm{rad},p_\mathrm{tan})$ be a static, self-gravitating shell supported in $\overline{I}$, where $I\subset(0,\infty)$ is an open, bounded interval such that $0\notin\overline{I}$; let $m(r)$ be the local mass of the shell.
Given two constants $\mathcal{K},\mathcal{S}>0$, and two functions $\widehat{p}_\mathrm{rad},\widehat{p}_\mathrm{tan}:(0,\infty)^2\to\R$ independent of $\mathcal{K},\mathcal{S}$, we say that $\mathfrak{B}$ is made of homogeneous elastic matter with constitutive functions $\widehat{p}_\mathrm{rad},\widehat{p}_\mathrm{tan}$ and that $\mathfrak{B}$ has reference density $\mathcal{K}>0$ and reference inner radius $\mathcal{S}$ if the radial and tangential pressures have the form
\[
p_\mathrm{rad}(r)=\widehat{p}_\mathrm{rad}(\delta(r),\eta(r)),\quad p_\mathrm{tan}(r)=\widehat{p}_\mathrm{tan}(\delta(r),\eta(r)),
\]
\[
\delta(r)=\frac{\rho(r)}{\mathcal{K}},\quad
\eta(r)=\left(\frac{\mathcal{S}}{r}\right)^3+\frac{m(r)}{\tfrac{4\pi}{3} \mathcal{K} r^3},
 \quad r\in I.
\]
\end{definition}
A particular important case is when the elastic material making up the body is hyperelastic.
\begin{definition}\label{hyperelasticdef}
Let $\mathcal{B}$ be a static, self-gravitating ball or shell of homogeneous elastic matter.
If there exists a function $w:(0,\infty)^2\to\R$ such that $w(1,1)=0$ and
\begin{equation}\label{hyperdef}
\widehat{p}_\mathrm{rad}(\delta,\eta)=\delta^2\partial_\delta w(\delta,\eta),\quad \widehat{p}_\mathrm{tan}(\delta,\eta)=\widehat{p}_\mathrm{rad}(\delta,\eta)+\frac{3}{2}\delta\eta\partial_\eta w(\delta,\eta),
\end{equation}
then the body is said to be made of hyperelastic matter with stored energy function $w$.
\end{definition}
A body made of (hyper)elastic matter will also be called (hyper)elastic body for short.
In Appendix~\ref{ssg} we show that Definitions~\ref{elasticdef}-\ref{hyperelasticdef} are formally equivalent to the standard definitions, given in Lagrangian coordinates, of homogeneous, (hyper)elastic balls/shells, when the deformation and the Piola-Kirchhoff stress tensor are assumed to be spherically symmetric. We show in Appendix~\ref{varsec} that the equation~\eqref{generalss} for hyperelastic bodies admits a variational formulation. 

{\it Remark.} Hyperelastic materials are the only admissible elastic matter models in General Relativity because the stored energy function is a source term for the gravitational field in Einstein's theory.

{\it Remark.} When the stored energy function does not depend on $\eta$, i.e., $w(\delta,\eta)=w(\delta)$, the elastic material reduces to barotropic fluid matter with pressure $p=\delta^2w'(\delta)=f(\rho)$.

{\it Remark.} As suggested by the terminology, $\mathcal{K}$ corresponds to the density of the body in the reference state. Similarly, the reference inner radius $\mathcal{S}$ of the shell corresponds to the inner radius of the shell in the reference state, see Appendix~\ref{ssg}. These two constants are the only link with the reference state in the definitions above. An alternative interpretation of the constants $\mathcal{K},\mathcal{S}$ in physical space is given at the end of this section.

{\it Remark.} We emphasize that the reference parameters $\mathcal{K},\mathcal{S}$ are {\it not} material constants, that is to say, balls and shells of a given elastic material, i.e., with given constitutive functions $\widehat{p}_\mathrm{rad}$, $\widehat{p}_\mathrm{tan}$,  can have different values of the reference parameters $\mathcal{K}$, $\mathcal{S}$.   
%

{\it Remark.} Definitions~\ref{elasticdef}--\ref{hyperelasticdef} could be extended to include inhomogeneous materials by letting the reference density $\mathcal{K}$ depend on $r>0$, but for simplicity we shall not do so. The class of homogeneous materials already encompasses the most common examples of elastic matter found in the literature, see Section~\ref{examples}, and includes in particular the case of barotropic fluid matter, which is one of the most popular matter models in astrophysics, see~\cite{KWW}.

%

We now translate in our formulation a number of standard assumptions on the constitutive functions of elastic matter. The usual formulation of these assumptions in Lagrangian coordinates is given in Appendix~\ref{ssg}.
The first assumption is that the reference state of the body should be a natural state, i.e., stress-free. Since in our formulation the reference state of the body corresponds to a state in which it has constant density $\mathcal{K}$, and, in the case of a shell, inner radius $\mathcal{S}$, then $\delta=\eta=1$ holds for a body in its reference state. Hence we require the condition
\begin{subequations}\label{lincomeul}
\begin{equation}\label{naturalstatecon}
\widehat{p}_\mathrm{rad}(1,1)=\widehat{p}_\mathrm{tan}(1,1)=0.
\end{equation}
The second assumption on the constitutive functions demands that for infinitesimal deformations the body should behave according to Hooke's law of linear elasticity. As shown in Appendix~\ref{ssg}, this condition is achieved by postulating the existence of constants $\lambda,\mu$, called Lam\'e coefficients, such that  
\begin{align}
&\partial_\delta\widehat{p}_\mathrm{rad}(1,1)=\lambda+2\mu,\quad \partial_\eta\widehat{p}_\mathrm{rad}(1,1)=-\frac{4}{3}\mu,\label{com1}\\
&\partial_\delta\widehat{p}_\mathrm{tan}(1,1)=\lambda,\quad\partial_\eta\widehat{p}_\mathrm{tan}(1,1)=\frac{2\mu}{3}\label{com2}.
\end{align}

{\it Remark.} The Poisson ratio of an elastic material is defined in terms of the Lam\'e coefficients as
$\nu=(\lambda+\mu)^{-1}\lambda/2$. 
The Poisson ratio of most materials (e.g., metals, rubber, etc.) lies in the interval $\nu\in (0,1/2)$. Materials with $\nu<0$, e.g., paper, are called auxetic.

To justify our next assumption, let $\rho_c=\rho(0)$ be the central density of an elastic ball. As   
\[
\lim_{r\to 0^+}\eta(r)=\frac{\rho_c}{\mathcal{K}}=\delta(0),
\]
the central pressures of the ball are
\[
p_\mathrm{rad}(0)=\widehat{p}_\mathrm{rad}(\delta(0),\delta(0)),\quad p_\mathrm{tan}(0)=\widehat{p}_\mathrm{tan}(\delta(0),\delta(0)).
\] 
Thus, in order to ensure that condition (v) in Definition~\ref{nbodydef} is satisfied for all possible values of the central density, we assume
\begin{equation}
\widehat{p}_\mathrm{rad}(\delta,\delta)=\widehat{p}_\mathrm{tan}(\delta,\delta),\quad \text{for all $\delta>0$.}
\end{equation}
\end{subequations}
%
In terms of the variables $(\delta,\eta,m)$ the system~\eqref{generalss} within each body reads
\begin{subequations}\label{generalss2}
\begin{align}
&\partial_\delta\widehat{p}_\mathrm{rad}(\delta,\eta)\delta'=-\frac{3}{r}\partial_\eta\widehat{p}_\mathrm{rad}(\delta,\eta)\left(\delta-\eta\right)-\frac{2}{r}(\widehat{p}_\mathrm{rad}(\delta,\eta)-\widehat{p}_\mathrm{tan}(\delta,\eta))-\mathcal{K}\,\delta\,\frac{m}{r^2},\label{eqdelta}\\
&\eta'=\frac{3}{r}\left(\delta-\eta\right),\\
&m'=4\pi\mathcal{K} r^2\delta,
\end{align}
\end{subequations}
where we recall that each body may have a different constitutive function as well as different reference parameters.

The system~\eqref{generalss2} must be complemented by appropriate initial conditions, which depend on the type and location of the body. In the case of a single ball, initial data are given at $r=0$ according to
\[
\delta(0)=\eta(0)=\delta_c:=\rho_c/\mathcal{K}>0,\quad m(0)=0,
\] 
where $\rho_c$ is the central density of the ball. If the body is a single shell with inner radius $r_0>0$, then the initial data for~\eqref{generalss2}  must be given at $r=r_0$ according to 
\[
\eta(r_0)=\left(\frac{\mathcal{S}}{r_0}\right)^3,\quad \delta(r_0):p_\mathrm{rad}(\delta(r_0),\eta(r_0))=0,\quad m(r_0)=0.
\]
If the body is a shell surrounding a $n$-body matter distribution with total mass $M=M_1+\dots+M_n$, then the initial data for~\eqref{generalss2} must be given at some radius $r_{2n}>r_{2n-1}$ according to
\[
\eta(r_{2n})=\left(\frac{\mathcal{S}}{r_{2n}}\right)^3,\quad \delta(r_{2n}):p_\mathrm{rad}(\delta(r_{2n}),\eta(r_{2n}))=0,\quad m(r_{2n})=M.
\]
\begin{definition}\label{regsol}
Given $0\leq r_0<R\leq \infty$, a regular solution  of~\eqref{generalss2} in the interval $(r_0,R)$ is a triple of non-negative functions $(\delta,\eta,m)\in C^1((r_0,R))\cap C^0([r_0,R))$ that satisfy~\eqref{generalss2} for $r\in (r_0,R)$. Moreover in the case $r_0=0$ we require
\begin{equation}\label{regcenter}
\lim_{r\to 0^+}\delta(r)=\lim_{r\to 0^+}\eta(r)=\delta_c,\quad\lim_{r\to 0^+}m(r)=0,
\end{equation} 
for some positive constant $\delta_c$.  Equation~\eqref{regcenter} will be referred to as the regular center condition.
\end{definition}
To conclude this section we want to discuss briefly the interpretation in physical space of the reference parameters $\mathcal{K},\mathcal{S}$. Consider first the case of an elastic ball with reference density $\mathcal{K}$ and total mass $M$. Let
\[
F(\delta)=\widehat{p}_\mathrm{rad}(\delta,\delta)=\widehat{p}_\mathrm{tan}(\delta,\delta).
\]
The central pressure $p_c$ of the ball is $p_c=F(\rho_c/\mathcal{K})$, where $\rho_c=\rho(0)$ is the central density.
Hence, provided $F$ is invertible, we obtain 
\begin{equation}\label{K}
\mathcal{K}=\frac{\rho_c}{F^{-1}(p_c)},
\end{equation}
which defines $\mathcal{K}$ in terms of the physical parameters $\rho_c,p_c$. Alternatively one may use the fact that at the radius $r_1$ of the ball there holds
\begin{equation}\label{K2}
\widehat{p}_\mathrm{rad}\left(\frac{\rho(r_1)}{\mathcal{K}},\frac{M}{\frac{4\pi}{3}\mathcal{K}r_1^3}\right)=0,
\end{equation}
which can be used to define $\mathcal{K}$ implicitly in terms of the physical parameters $\rho(r_1),r_1,M$. 
A similar interpretation can be given for the reference parameters of an elastic shell. Let $r_0$ be the inner radius of the shell in physical space. As $\eta(r_0)=(\mathcal{S}/r_0)^3$ and $p_\mathrm{rad}(r_0)=0$, then 
\begin{equation}\label{S1}
\widehat{p}_\mathrm{rad}\left(\frac{\rho(r_0)}{\mathcal{K}},\left(\frac{\mathcal{S}}{r_0}\right)^3\right)=0.
\end{equation}
Similarly, on the outer radius $r_1>0$ of the shell there holds
\begin{equation}\label{S2}
\widehat{p}_\mathrm{rad}\left(\frac{\rho(r_1)}{\mathcal{K}},\left(\frac{\mathcal{S}}{r_1}\right)^3+\frac{M}{\frac{4\pi}{3}\mathcal{K}r_1^3}\right)=0,
\end{equation}
where $M$ is the mass of the shell.
The system~\eqref{S1}-\eqref{S2} defines implicitly the reference parameters $\mathcal{K},\mathcal{S}$ in terms of the physical parameters $r_0,r_1,\rho(r_0),\rho(r_1)$ and $M$. Of course, explicit formula for the reference parameters in terms of the physical ones can be derived only for very simple constitutive functions, while in general one has to solve~\eqref{K}--\eqref{S2} numerically.
\section{Example of constitutive functions}\label{examples}
In this section we present a selection of constitutive functions for elastic materials using the formulation given in the previous section. We emphasize that the formula in this section are only valid for spherically symmetric distributions. For the analogous constitutive functions expressed in Lagrangian coordinates and with no symmetry restriction, see Appendix~\ref{ssg} and the references~\cite{BH,Drozdov, Lurie, TW}. All examples in this section verify the  conditions~\eqref{lincomeul}.
\subsubsection*{Seth materials}
The constitutive functions for spherically symmetric Seth materials are 
\begin{subequations}\label{sethmodel}
\begin{align}
&\widehat{p}_\mathrm{rad}(\delta,\eta)=\lambda\,\eta^{2/3}+\frac{\lambda+2\mu}{2}\eta^{-4/3}\delta^2-p_0,\\
&\widehat{p}_\mathrm{tan}(\delta,\eta)=(\lambda+\mu)\,\eta^{2/3}+\frac{\lambda}{2}\eta^{-4/3}\delta^2-p_0,\\
&p_0=\frac{3\lambda+2\mu}{2},
\end{align}
\end{subequations}
where the Lam\'e coefficients satisfy $\mu>0$ and $3\lambda+2\mu>0$ (i.e., $p_0>0$) and so $-1<\nu<1/2$. 
A detailed analysis of this model is presented in Section~\ref{sethsec}. The Seth model is not hyperelastic and therefore it is not suitable as elastic matter model in General Relativity.



\subsubsection*{Saint Venant-Kirchhoff materials}
The Saint Venant-Kirchhoff model is hyperelastic with stored energy function given by 
\begin{align*}
w(\delta,\eta)=\frac{1}{8} \left(\frac{\eta ^{4/3}}{\delta ^2}+\frac{2}{\eta ^{2/3}}-3\right)^2 (\lambda
   +2 \mu )+\mu  \left(\frac{\eta ^{4/3}}{\delta ^2}+\frac{2}{\eta
   ^{2/3}}-3\right)-\frac{\mu}{2}  \left(\frac{2 \eta ^{2/3}}{\delta ^2}+\frac{1}{\eta
   ^{4/3}}-3\right),
\end{align*}
where the Lam\'e coefficients satisfy $\mu>0$, $3\lambda+2\mu>0$.
It follows that
\begin{align*}
\widehat{p}_\mathrm{rad}(\delta,\eta) &=\mu\frac{\eta^{4/3}}{\delta}\left(1-\frac{\eta^{4/3}}{\delta^2}\right)+\lambda\frac{\eta^{2/3}}{2\delta}\left(3\eta^{2/3}-\frac{\eta^2}{\delta^2}-2\right), \\
\widehat{p}_\mathrm{tan}(\delta,\eta)& = \mu\frac{\delta}{\eta^{2/3}}\left(1-\eta^{-2/3}\right)+\lambda\frac{\delta}{2\eta^{2/3}}\left(3-2\eta^{-2/3}-\frac{\eta^{4/3}}{\delta^2}\right).
\end{align*}
\subsubsection*{Quasi-linear Signorini materials}
The (quasi-linear) Signorini model is hyperelastic with stored energy function given by
\begin{align*}
w(\delta,\eta) =\frac{1}{\delta}\left[\frac{1}{8} \left(\frac{\delta ^2}{\eta ^{4/3}}+2 \eta ^{2/3}-3\right)^2 (\lambda
   +\mu )+\frac{\mu}{2} \left(\frac{\delta ^2}{\eta ^{4/3}}+2 \eta
   ^{2/3}-1\right)\right]-\mu,
\end{align*}
where $\mu>0$ and $9\lambda+5\mu >0$, i.e., $-5/8<\nu<1/2$.
The constitutive functions for the radial and tangential pressure read
\begin{align*}
&\widehat{p}_\mathrm{rad}(\delta,\eta) =\frac{\lambda+\mu}{8}\eta^{4/3}\left(3\left(\frac{\delta}{\eta}\right)^4+4\left(\frac{\delta}{\eta}\right)^2-4\right)+\frac{3\lambda+\mu}{4}\eta^{2/3}\left(2-\left(\frac{\delta}{\eta}\right)^2\right)-p_0,\\
&\widehat{p}_\mathrm{tan}(\delta,\eta) =\frac{\lambda+\mu}{8}\eta^{4/3}\left(4-\left(\frac{\delta}{\eta}\right)^4\right)+\frac{3\lambda+\mu}{4}\frac{\delta^2}{\eta^{4/3}}-p_0,\\
&p_0=\frac{9\lambda+5\mu}{8}.
\end{align*}
\subsubsection*{Hadamard materials}
The Hadamard model is hyperelastic with stored energy function given by
\[
w(\delta,\eta)=\frac{1}{2} \left(\alpha \left(\frac{\eta ^{4/3}}{\delta ^2}+\frac{2}{\eta
   ^{2/3}}-3\right)+\beta \left(\frac{2 \eta ^{2/3}}{\delta ^2}+\frac{1}{\eta
   ^{4/3}}-3\right)+h\left(\delta^{-2}\right)-h(1)\right),
\]
where the compatibility conditions~\eqref{lincomeul} imply
\[
h'(1)=-(\alpha+2\beta),\quad h''(1)=\frac{\lambda+2\mu}{2},\quad \alpha+\beta=\mu.
\] 
The constitutive functions for the radial and tangential pressure are
\begin{align*}
&\widehat{p}_\mathrm{rad}(\delta,\eta)=-\frac{1}{\delta}  \left(\alpha \eta ^{4/3}+2 \beta \eta ^{2/3}+h'\left(\delta^{-2}\right)\right),\\ 
&\widehat{p}_\mathrm{tan}(\delta,\eta)=-\frac{1}{\delta}\left[\alpha\eta^{4/3}\left(\frac{\delta}{\eta}\right)^2+\beta\eta^{2/3}\left(1+\left(\frac{\delta}{\eta}\right)^2\right)+h'\left(\delta^{-2}\right)\right].
\end{align*}
\subsubsection*{Materials with linear constitutive function}
Due to the compatibility conditions~\eqref{lincomeul}, the constitutive functions of spherically symmetric elastic bodies have all the same linear approximation, namely 
\begin{align*}
&\widehat{p}_\mathrm{rad}(\delta,\eta)=(\lambda+2\mu)\delta-\frac{4\mu}{3}\eta-p_0,\\
&\widehat{p}_\mathrm{tan}(\delta,\eta)=\lambda\,\delta+\frac{2\mu}{3}\eta-p_0,\\
&p_0=\frac{3\lambda+2\mu}{3}.
\end{align*}
Materials with linear constitutive function are hyperelastic with stored energy function given by
\[
w(\delta,\eta)=(\lambda+2\mu)\log\delta-\frac{4\mu}{3}\log\eta+\frac{4\mu}{3}\frac{\eta}{\delta}+\frac{3\lambda+2\mu}{3\delta}-\lambda-2\mu.
\]
\section{Analysis of the Seth model}\label{sethsec}
In this section we present a detailed analysis of the Seth model, for which the constitutive functions of the principal pressures are given in terms of the Lam\'e coefficients by~\eqref{sethmodel}. Although this is not always necessary in what follows, we assume that
\[
\lambda>0,\quad\mu>0,\quad \text{hence}\quad  \nu=\frac{\lambda}{2(\lambda+\mu)}\in (0,1/2),
\]
which are conditions satisfied by most materials. As
$\partial_\delta\widehat{p}_\mathrm{rad}(\delta,\eta)>0$ holds in the matter interior,
the system~\eqref{generalss2} for Seth materials becomes
\begin{subequations}\label{Setheqs}
\begin{align}
&\delta'=-\frac{3}{r}\theta_1(\delta,\eta)(\delta-\eta)-\frac{2}{r}\theta_2(\delta,\eta)-\theta_3(\delta,\eta)\frac{m}{r^2}\delta,\\
&\eta'=\frac{3}{r}(\delta-\eta),\quad m'=4\pi\mathcal{K} r^2\delta,
\end{align}
where
\begin{align}
&\theta_1(\delta,\eta):=\frac{\partial_\eta\widehat{p}_\mathrm{rad}(\delta,\eta)}{\partial_\delta\widehat{p}_\mathrm{rad}(\delta,\eta)}=\frac{2}{3(\lambda+2\mu)}\left(\lambda\frac{\eta}{\delta}-(\lambda+2\mu)\frac{\delta}{\eta}\right),\\[5mm]
&\theta_2(\delta,\eta):=\frac{\widehat{p}_\mathrm{rad}(\delta,\eta)-\widehat{p}_\mathrm{tan}(\delta,\eta)}{\partial_\delta\widehat{p}_\mathrm{rad}(\delta,\eta)}=\frac{\mu}{\lambda+2\mu}\frac{\delta^2-\eta^2}{\delta},\\[5mm]
&\theta_3(\delta,\eta):=\frac{\mathcal{K}}{\partial_\delta\widehat{p}_\mathrm{rad}(\delta,\eta)}=\frac{\mathcal{K}}{\lambda+2\mu}\frac{\eta^{4/3}}{\delta}.
\end{align}
\end{subequations}

The equations~\eqref{Setheqs} admit the (non-regular) self-similar solution $(\delta^\star,\eta^\star,m^\star)\in C^1((0,\infty))$ given by
\begin{equation}\label{sethexact}
\delta^\star(r)=\frac{c}{\mathcal{K}}\frac{1}{r^{3/2}},\quad \eta^\star(r)=\frac{2c}{\mathcal{K}}\frac{1}{r^{3/2}},\quad m^\star(r)=\frac{8\pi c}{3}r^{3/2},
\end{equation}
where
\[
c=\frac{\left(\frac{3}{\pi }\right)^{3/4} (9\lambda +14 \mu )^{3/4}}{16 \sqrt{\mathcal{K} }}.
\]
The associated principal pressures are given by
\begin{align*}
&p_\mathrm{rad}^\star(r)=-p_0+\frac{9\lambda+2\mu}{8}\left(\frac{2c}{\mathcal{K}}\right)^{2/3}\frac{1}{r},\\
& p_\mathrm{tan}^\star(r)=-p_0+\frac{9\lambda+8\mu}{8}\left(\frac{2c}{\mathcal{K}}\right)^{2/3}\frac{1}{r}.
\end{align*}
Note that $p^\star_\mathrm{tan}(r)>p^\star_\mathrm{rad}(r)$, for all $r>0$, the radial pressure is positive for $r<R$, negative for $r>R$ and vanishes at $r=R$, where
\[
R=\left(\frac{2c}{\mathcal{K}}\right)^{2/3}\frac{9\lambda+8\mu}{12\lambda+8\mu}.
\]
Hence the self-similar solution truncated at $r=R$ describes a static, self-gravitating ball with irregular center.
 It is convenient introduce the new variables  
\[
x(r)=\vartheta\,\eta(r)^{2/3},\quad y(r)=\frac{\delta(r)}{\eta(r)},\quad z=\frac{3}{4\pi\mathcal{K}}\frac{m(r)}{r^3\eta(r)},
\]
where
\[
\vartheta=\mathcal{K}\,\sqrt{\frac{4\pi}{3(\lambda+2\mu)}}.
\]
Note that $z=1$ holds for balls, while $z<1$ for shells.
In terms of the new variables the system~\eqref{Setheqs} for regular solutions becomes
\begin{subequations}\label{generalssSeth}
\begin{align}
&x'=-\frac{2x}{r}(1-y),\label{xeq}\\
&y'=\frac{1}{r}(a+b y+y^2)\frac{1-y}{y}-rx^2z,\\
&z'=\frac{3y}{r}(1-z),
\end{align}
\end{subequations}
where
\[
a=\frac{2(\lambda+\mu)}{\lambda+2\mu}\geq 1,\quad b=\frac{2\mu}{\lambda+2\mu}\leq 1
\]
are positive material constants.
The self-similar solution takes the form
\[
x^\star(r)=\left(\frac{2c}{\mathcal{K}}\right)^{2/3}\frac{\vartheta}{r},\quad y^\star(r)=\frac{1}{2},\quad z^\star(r)=1.
\]
We call a solution of~\eqref{generalssSeth} regular if the corresponding solution of~\eqref{Setheqs} is regular in the sense of Definition~\ref{regsol}. In particular the regular center condition~\eqref{regcenter} in the new variables becomes
\[
\lim_{r\to 0^+}y(r)=\lim_{r\to 0^+}z(r)=1.
\]
\begin{theorem}\label{cauchyprob}
Let $r_0\geq0$, $x_0>0$, $y_0>0, z_0\geq 0$ be given with $y_0=z_0=1$ if $r_0=0$. There exists a unique regular solution $(x(r),y(r),z(r))\in C^1([r_0,\infty))$ of~\eqref{generalssSeth} satisfying $\lim_{r\to r_0^+}(x(r),y(r),z(r))=(x_0,y_0,z_0)$. Moreover the following holds:
\begin{itemize}
\item[(i)] if $y_0\leq1$, then $y(r)<1$ and $x(r)< x_0$, for all $r>r_0$;
\item[(ii)] $(x(r),y(r),z(r))\sim (x^\star(r),y^\star(r),z^\star(r))$ as $r\to \infty$.
\end{itemize}
\end{theorem}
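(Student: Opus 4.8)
The plan is to handle the Cauchy problem in four stages: local existence and uniqueness, global existence via a priori bounds, the monotonicity statement (i), and the asymptotic statement (ii).

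For local existence I would distinguish the two cases in the hypothesis. If $r_0>0$ the right-hand side of \eqref{generalssSeth} is smooth on $\{x>0,y>0\}$ (nothing singular happens at $z=0$), so Picard--Lindel\"of gives a unique $C^1$ solution on a maximal interval $[r_0,R_{\max})$. If $r_0=0$ one must solve a genuinely singular initial value problem with $y(0)=z(0)=1$. Here I would first note that the third equation forces $z\equiv1$: from the integrated form $1-z=(1-z_0)\exp\big(-\int_{r_0}^r 3y(s)s^{-1}\,ds\big)$, a nonzero value of $1-z$ at any $r>0$ would blow up as $r\to0^+$, contradicting $z(0)=1$. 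The $r_0=0$ problem thus reduces to the planar system for $(x,y)$. Guided by the formal expansion $1-y\sim x_0^2 r^2/5$ at the centre, I would set $u=1-y$, $\phi=u/r^2$, rewrite the $y$-equation in the resonant form $r\phi'+5\phi=h(r,x,\phi)$ with $h$ continuous and $h(0)=x_0^2$, use the explicit bounded inverse $\phi(r)=r^{-5}\int_0^r s^4 h(s)\,ds$ (of operator norm $\le 1/5$) of $\phi\mapsto r\phi'+5\phi$ on $C^0([0,\varepsilon])$, and, together with $x(r)=x_0\exp\big(-2\int_0^r s\phi(s)\,ds\big)$, recast everything as a fixed-point problem for a map that is a contraction on a small ball in $C^0([0,\varepsilon])^2$. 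Uniqueness among all regular solutions then follows because any regular solution has $z\equiv1$ and $y\le1$ near the centre (see below), so a comparison argument in the $u$-equation forces $u=O(r^2)$ and hence it too solves the fixed-point equation.

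Next come the a priori bounds. The integrated $z$-equation keeps $z$ between $\min(z_0,1)$ and $\max(z_0,1)$, and gives $z(r)\to1$ once $y$ is bounded below. The sign of $y'$ shows $y\le\max(y_0,1)$ everywhere (for $y>1$ the first term of $y'$ is negative and $-rx^2z\le0$); this bound on $y$ makes $|x'/x|\le L/r$ on $[r_0,\cdot)$, so $x$, and then $z$, stay bounded on every finite interval; finally, when $y$ is near $0$ the term $a/(ry)$ dominates and forces $y'>0$, which via a barrier argument keeps $y$ away from $0$ on every finite interval. Hence nothing blows up in finite $r$ and $R_{\max}=\infty$. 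For (i): if $y_0\le1$ then $y\le1$ everywhere by the above, and $y(r_1)=1$ at an interior point $r_1>r_0$ would be a maximum with $y'(r_1)=-r_1x(r_1)^2z(r_1)<0$ (using $z(r_1)>0$), a contradiction; thus $y(r)<1$ for $r>r_0$, whence $x'=-\tfrac{2x}{r}(1-y)<0$ and $x(r)<x_0$.

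Finally, for (ii) I would pass to $\tau=\log r$ and $X=rx$, obtaining the autonomous system $\dot X=X(2y-1)$, $\dot y=F(y)-X^2z$, $\dot z=3y(1-z)$ with $F(y)=(a+by+y^2)(1-y)/y$, whose relevant equilibria are $(0,1,1)$ and $(C,\tfrac12,1)$ with $C^2=F(\tfrac12)=a+\tfrac b2+\tfrac14$; a routine computation identifies $C=(2c/\mathcal K)^{2/3}\vartheta$, so the second equilibrium is exactly the self-similar solution. Using the Lyapunov-type function $V=\tfrac12X^2+Ky$ with $K$ large one gets $\dot V\le-X^2+\mathrm{const}$ for $\tau$ large (where $z$ is close to $1$), hence $X$ is bounded and the forward orbit has compact closure in $\{X\ge0,y>0\}$. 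Since $z\to1$ exponentially, the system is asymptotically autonomous with planar limit $\dot X=X(2y-1)$, $\dot y=F(y)-X^2$, for which Bendixson--Dulac with weight $1$ applies because $\partial_X\big(X(2y-1)\big)+\partial_y\big(F(y)-X^2\big)=-b-a/y^2<0$, ruling out periodic orbits and polycycles; Poincar\'e--Bendixson then forces the $\omega$-limit to be one of the two equilibria. It cannot be $(0,1)$, since there $y\to1$ would give $\tfrac{d}{d\tau}\log X=2y-1\to1$ and hence $X\to\infty$, contradicting $X\to0$; so the $\omega$-limit is $(C,\tfrac12)$, and translating back yields $rx(r)\to C$, $y(r)\to\tfrac12$, $z(r)\to1$, i.e. $(x,y,z)\sim(x^\star,y^\star,z^\star)$. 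I expect the two delicate points to be the singular centre problem — choosing the resonant reformulation and function space so the contraction closes, and checking that every regular solution is captured — and, in (ii), the global bound on $rx$ together with the exclusion of non-convergent behaviour; the latter closes cleanly only thanks to the fortunate fact that the Dulac divergence $-b-a/y^2$ is everywhere negative.
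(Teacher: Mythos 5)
Your overall strategy is the same as the paper's: pass to $u=rx$ (your $X$) and a logarithmic-type time change so that \eqref{generalssSeth} becomes autonomous, prove (i) by a first-crossing/invariance argument, and prove (ii) by reducing to the planar problem on $z=1$, excluding cycles by Bendixson--Dulac, and invoking Poincar\'e--Bendixson to force convergence to the fixed point corresponding to the self-similar solution, with the saddle $(0,1)$ excluded. Two of your choices are genuinely nice: the explicit contraction treatment of the singular centre (the paper simply quotes Rendall--Schmidt and Park for this step), and the parametrization $\tau=\log r$, in which the divergence is $-b-a/y^{2}<0$ outright --- this is equivalent to applying the Dulac weight $1/y$ to the paper's system \eqref{newdynsys2} and avoids the paper's two-case estimate. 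Your identification of the equilibrium with the self-similar solution and the exponential-growth argument excluding $(0,1)$ are also correct.

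The genuine gap is in the precompactness step of (ii), and it matters because everything downstream (asymptotic autonomy, Poincar\'e--Bendixson) depends on it. With $V=\tfrac12X^{2}+Ky$ you get $\dot V\le -cX^{2}+K\,F(y)+\mathrm{const}$ with $F(y)=(a+by+y^{2})(1-y)/y\sim a/y$ as $y\to0^{+}$, so the claimed inequality $\dot V\le -X^{2}+\mathrm{const}$ requires $y$ to be bounded away from $0$ for large $\tau$ --- which you have not established: your barrier for $y$ works only on finite intervals because it uses bounds on $x$ there, and asymptotically the barrier reads $y\gtrsim a/(X^{2}z)$, useful only if $X$ is already known to be bounded. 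Moreover, for $y$ to become small the orbit must have passed through $X^{2}z>F(y)$, so small $y$ forces large $X$, and then $K F(y)$ is of order $X^{2}$ with the wrong sign, so the Lyapunov inequality genuinely fails there. The claim ``$z\to1$ exponentially'' suffers from the same problem, since it needs $\int^{\infty}y\,d\tau=\infty$; and precompactness \emph{in} $\{y>0\}$ (needed because the limiting field blows up at $y=0$) is exactly the statement $\liminf y>0$. Thus the two bounds you need --- $X$ bounded above and $y$ bounded below --- are circularly dependent as written. The paper breaks this circle differently: it first shows $z\to1$ by monotonicity plus the observation that otherwise $y$ would have to tend to $0$, which is impossible since $dy/d\xi=a>0$ at $y=0$; it then works on the invariant plane $z=1$ and, in place of a Lyapunov function, exhibits the explicit forward-invariant disc $D$ about $P$ and shows that every orbit entering $(0,\infty)^{2}$ is eventually trapped in $D\cap\mathcal{U}_*$, ruling out permanence in $\{u<u_P-1/2\}$ by Poincar\'e--Bendixson and in $\{u>u_P+1/2\}$ by showing $y$ drops below $1/2$ there, whence $u$ decreases; this trapping also supplies the a priori bounds needed for global existence, which you obtained separately. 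Your scheme can be repaired along similar lines (e.g.\ note that $X$ increases only while $y>1/2$, and that for $X$ large and $y>1/2$ one has $\dot y\le F(1/2)-X^{2}z_{\min}<0$, so such excursions are short), but as stated the compactness step does not close.
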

\begin{proof}

Local existence and uniqueness of regular solutions in a right interval $[r_0,r_0+\varepsilon)$ for some $\varepsilon>0$ follows by standard ODE theory when $r_0>0$, while for $r_0=0$ it follows by results proved e.g. in~\cite{RS} (see also~\cite{P} for a detailed proof of a similar result in the general relativistic case). Moreover the solution can be continued uniquely up to $R>r_0+\varepsilon$ as long as $\inf_{(r_0,R)}y(r)>0$, $\sup_{(r_0,R)}y(r)<\infty$, $\sup_{(r_0,R)}x(r)<\infty$ and $\sup_{(r_0,R)}z(r)<\infty$. As shown by the dynamical systems analysis below, these bounds are verified {\it a priori} for all $R>r_0$,  
hence the proof of the first statement of the theorem will follow {\it a posteriori} by this analysis.
 
Let $r>r_0$, $u(r)=rx(r)$, and introduce the new independent variable $\xi$ by
\[
\frac{d}{d\xi}=ry\frac{d}{dr}.
\]
The system~\eqref{generalssSeth} in the new variables becomes the autonomous dynamical system
\begin{equation}\label{newdynsys}
\frac{du}{d\xi}=-u(1-2y)y,\quad\frac{dy}{d\xi}=(a+by+y^2)(1-y)-u^2yz,\quad \frac{dz}{d\xi}=3y^2(1-z).
\end{equation}
Note that $u=0$ and $z=1$ are invariant surfaces for the dynamical system~\eqref{newdynsys}. Moreover, since 
\[
\left(\frac{dy}{d\xi}\right)_{y=0}=a>0,\quad \left(\frac{dz}{d\xi}\right)_{z=0}=3 y^2,
\]
then $\mathcal{V}=(0,\infty)^3$ is a forward invariant set. As $(dy/d\xi)_{|y=1}<0$ on $\mathcal{V}$, the region $\mathcal{V}_*=(0,\infty)\times(0,1)\times(0,\infty)$ is also forward invariant, which in particular using~\eqref{xeq} proves the claim (i) in the theorem.

We now show that $z\to 1$ as $\xi\to\infty$ along any orbit of~\eqref{newdynsys}. In fact, if $\gamma=(u(\xi),y(\xi),z(\xi))$ is an orbit in the region $\{z>1\}$, then $z(\xi)$ is decreasing, while if $\gamma\in\{z<1\}$, then $z(\xi)$ is increasing. Hence the $z$-component of any orbit must have a limit. If this limit is not 1, then it must hold $y(\xi)\to0$, as $\xi\to\infty$, but this is impossible because $(dy/d\xi)_{|y=0}=a>0$. We conclude that the $\omega$-limit set of any orbit $\gamma$ coincides with the $\omega$-limit set of the projection of $\gamma$ on the surface $z=1$. Hence we can restrict to study the 2-dimensional dynamical system induced by~\eqref{newdynsys} on $z=1$, which is
\begin{equation}\label{newdynsys2}
\frac{du}{d\xi}=-u(1-2y)y,\quad\frac{dy}{d\xi}=(a+by+y^2)(1-y)-u^2y.
\end{equation}
Let $\mathcal{U}=(0,\infty)^2$
and let $\Gamma$ be the set of all orbits $\gamma$ of~\eqref{newdynsys2} such that $\gamma\cap\mathcal{U}$ is not empty. Since $\mathcal{U}$ is forward invariant, then for all $\gamma\in\Gamma$ there exists $\xi_0\in\R$ such that $\gamma(\xi)\in\mathcal{U}$, for all $\xi>\xi_0$. As we are only interested in positive solutions of~\eqref{generalssSeth}, we can restrict ourselves to study the behavior of the orbits $\gamma\in\Gamma$.


We now show that any orbit $\gamma\in\Gamma$ must eventually be trapped in the region $\mathcal{U}_*=(0,\infty)\times(0,1)$.
If not, and since $\mathcal{U}_*$ is forward invariant, then there would exist $\gamma\in\Gamma$, $\gamma(\xi)=(u(\xi),y(\xi))$, such that $\gamma\cap\mathcal{U}_*$ is empty. Thus $y(\xi)>1$ for all sufficiently large $\xi\in\R$ and so $u(\xi)\to\infty$ as $\xi\to\infty$. In particular there exists $\xi_0\in\R$ such that $u(\xi)>1$, for all $\xi>\xi_0$. Hence $dy/d\xi\leq -y$, that is $y(\xi)\leq C \exp(-\xi)$, for $\xi>\xi_0$, where $C>0$ is a numerical constant, which leads to the contradiction that $y(\xi)<1$ for $\xi$ large.

The dynamical system~\eqref{newdynsys2} has two fixed points:
\[
P: (u_P,y_P)=\frac{1}{2}\left(\sqrt{1+4a+2b},1\right),\quad
Q: (u_Q,y_Q)=(0,1).
\]
The fixed point $P$ is an hyperbolic sink and corresponds to the self-similar solution~\eqref{sethexact}, while the fixed point $Q$ is an hyperbolic saddle, with the unstable direction pointing towards the interior of $\mathcal{U}_*$. In particular, $Q$ is the source of one only interior orbit. It is easy to show that this orbit corresponds to the solutions of~\eqref{Setheqs} with regular center, but we shall not make use of this fact.
The claim (ii) in the theorem has now been reduced to the statement that the fixed point $P$ is the $\omega$-limit of all orbits $\gamma\in\Gamma$ of~\eqref{newdynsys2}. To prove this we begin by showing that there are no periodic orbits within $\mathcal{U}_*$. Let $v(u,y)$ be the vector field in the right hand side of~\eqref{newdynsys2}. We have
\[
\mathrm{div}\,v(u,y)=b-a+y-2by-u^2-y^2.
\]
If $\lambda\leq 2\mu$ we estimate
\[
\mathrm{div}\,v(u,y)\leq b-a+y(1-2b)=-\frac{2\lambda}{\lambda+2\mu}+\frac{\lambda-2\mu}{\lambda+2\mu}y<0,\quad (u,y)\in \mathcal{U}_*.
\]
If $\lambda>2\mu$ we estimate
\[
\mathrm{div}\,v(u,y)\leq 1+b-a=\frac{2\mu-\lambda}{\lambda+2\mu}<0,\quad (u,y)\in \mathcal{U}_*.
\]
Hence $\mathrm{div}\,v(u,y)<0$ for all $(u,y)\in \mathcal{U}_*$ and thus, by the Bendixson-Dulac theorem, there are no periodic orbits within $\mathcal{U}_*$. Next define
\[
D=\{(u,y):(u-u_P)^2+(y-y_P)^2<1/2\}.
\]
We now show that the open region $D\cap \mathcal{U}_*$ 
is forward invariant (see Figure~\ref{figure}). As we already know that $\mathcal{U}_*$ is forward invariant, it suffices to prove that
\[
\left(\frac{d}{d\xi}[(u-u_P)^2+(y-y_P)^2]\right)_{\partial D\cap\, \mathcal{U}_*}<0.
\]
A simple computation shows that
\[
\frac{d}{d\xi}[(u-u_P)^2+(y-y_P)^2]=2y(u-u_P)^2(y-y_P)-(y-y_P)^2(4a+y(2y+2b-1)).
\]
Using $a\geq 1$ and $b\leq 1$ we find
\[
4a+y(2y+2b-1)\geq 4a-\frac{(2b-1)^2}{8}\geq 4a-\frac{1}{8}\geq \frac{31}{8}.
\]
Hence 
\[
\left(\frac{d}{d\xi}[(u-u_P)^2+(y-y_P)^2]\right)_{\partial D\cap \mathcal{U}_*}\leq 2(1/2)^{3/2}-\frac{31}{8}(1/2)<0.
\]
We now prove that each orbit $\gamma\in\Gamma$ must eventually be trapped within $D\cap\mathcal{U}_*$. If not, and since $D\cap\mathcal{U}_*$ is forward invariant, then there would exist an orbit $\gamma\in\Gamma$, $\gamma(\xi)=(u(\xi),y(\xi))$, such that, for all sufficiently large $\xi$,  either $0<u(\xi)<u_P-1/2$ or $u(\xi)>u_P+1/2$, see Figure~\ref{figure}.  In the first case the orbit is positively bounded, hence its $\omega$-limit set must exist. However, due to the absence of periodic orbits and the local properties of  the fixed point $Q$ described above, the Poincar\'e-Bendixson theorem entails that no point in the region $(0,u_P-1/2)\times (0,1)$ can be a $\omega$ limit point, hence no orbit can be trapped in this region. In the second case, i.e., $u(\xi)>u_P+1/2$ for all sufficiently large $\xi$, we show that $y(\xi)<1/2$, for $\xi$ large enough. In fact,
\begin{align*}
\left(\frac{dy}{d\xi}\right)_{u>u_p+1/2,1/2\leq y<1}&\leq \frac{1}{2}(1+a+b)-\frac{1}{2}(u_p+1/2)^2\\
&=\frac{1}{2}(1/2+b/2-u_p)\\
&\leq \frac{1}{4}(1+b-\sqrt{5+2b})< 0,\ \text{for all $0\leq b\leq 1$}.
\end{align*}
Thus each orbit $\gamma\in\Gamma$ that satisfies $u(\xi)>u_P+1/2$ for large $\xi$ must eventually be trapped in the subregion where $0<y<1/2$. Hence $du/d\xi<0$ for large $\xi$, which implies that the orbit is positively bounded. Thus its $\omega$-limit set should be non-empty, but, again due to Poincar\'e-Bendixson's theorem, this is not possible. We conclude that, as claimed,  each orbit intersecting $\gamma\in\Gamma$ must eventually be trapped within $D\cap\mathcal{U}_*$ and so its $\omega$-limit set must lie in this region. As the only possible $\omega$-limit point in $D\cap\mathcal{U}_*$ is the fixed point $P$, the proof is complete.
\end{proof}

\begin{figure}[ht!]
	\begin{center}
		\psfrag{P}[cc][cc][1.0][0]{$\mathbf{\mathrm{P}}$}
		\psfrag{Q}[cc][cc][1.0][0]{$\mathbf{\mathrm{Q}}$}
		\psfrag{u-}[cc][cc][1.0][0]{$u_{P}-\frac{1}{2}$}
		\psfrag{u+}[cc][cc][1.0][0]{$u_{P}+\frac{1}{2}$}
		\psfrag{b}[cc][cc][1.0][0]{$\partial D\cap\mathcal{U}_*$}
		\psfrag{y}[cc][cc][1.0][0]{$y$}
		\psfrag{y1}[cc][cc][1.0][0]{$y=1$}
		\psfrag{u}[cc][cc][1.0][0]{$u$}
		\includegraphics[width=0.6\textwidth, trim = 0cm 0cm 0cm 0cm]{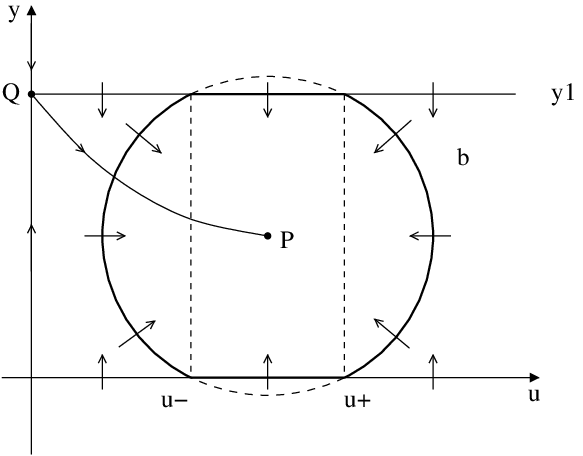}
	\end{center}
	\caption{All orbits 
		of the dynamical system~\eqref{newdynsys} entering the region $\mathcal{U}=(0,\infty)^2$ are eventually trapped in the region $D\cap\mathcal{U}_*$
	}
	\label{figure}
\end{figure}
\begin{corollary}\label{coro}
For all regular solutions of~\eqref{Setheqs} there holds
\[
\bar{p}_\mathrm{rad}(r):=\widehat{p}_\mathrm{rad}(\delta(r),\eta(r))\to -p_0=-\frac{3\lambda+2\mu}{2},\quad \text{as $r\to\infty$.}
\]
\end{corollary}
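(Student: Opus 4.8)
The plan is to deduce the corollary directly from the asymptotics established in Theorem~\ref{cauchyprob}, after rewriting the radial constitutive function of the Seth model in the dynamical-systems variables. First I would substitute $x=\vartheta\,\eta^{2/3}$ and $y=\delta/\eta$ into~\eqref{sethmodel}. From $\eta^{2/3}=x/\vartheta$ and $\delta^2=y^2\eta^2=y^2(x/\vartheta)^3$ one gets $\eta^{-4/3}\delta^2=y^2x/\vartheta$, so that
\[
\bar p_\mathrm{rad}(r)=\widehat{p}_\mathrm{rad}(\delta(r),\eta(r))=\frac{x(r)}{\vartheta}\left(\lambda+\frac{\lambda+2\mu}{2}\,y(r)^2\right)-p_0 .
\]

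The second step is to feed in the long-range behavior. By Theorem~\ref{cauchyprob}(ii), along any regular solution $(x(r),y(r),z(r))\sim(x^\star(r),y^\star(r),z^\star(r))$ as $r\to\infty$; since $x^\star(r)=(2c/\mathcal{K})^{2/3}\vartheta/r\to0$ and $y^\star(r)\equiv1/2$, this gives $x(r)\to0$ and $y(r)\to1/2$, and in particular $y(r)$ is bounded for large $r$. Inserting these into the displayed identity, the first term on the right-hand side tends to $0$ and hence $\bar p_\mathrm{rad}(r)\to-p_0=-\tfrac{3\lambda+2\mu}{2}$, as claimed.

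There is essentially no obstacle here beyond the work already carried out in Theorem~\ref{cauchyprob}: the only points requiring (minimal) care are the algebraic rewriting of $\widehat{p}_\mathrm{rad}$ in the $(x,y)$ variables and the observation that $y(r)$ stays bounded as $r\to\infty$, so that the product $x(r)\bigl(\lambda+\tfrac{\lambda+2\mu}{2}y(r)^2\bigr)$ genuinely vanishes in the limit rather than producing an indeterminate form. As a sanity check one can note that the explicit self-similar radial pressure $p_\mathrm{rad}^\star(r)=-p_0+\tfrac{9\lambda+2\mu}{8}(2c/\mathcal{K})^{2/3}r^{-1}$ manifestly tends to $-p_0$, consistent with the above; but I would present the computation in the $(x,y)$ variables since it makes the passage to the limit self-contained and avoids having to specify in what sense the principal pressures are asymptotic to their self-similar counterparts.
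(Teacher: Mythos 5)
Your proposal is correct and is essentially the argument the paper intends: the corollary is stated without a separate proof precisely because, as you show, rewriting $\widehat{p}_\mathrm{rad}(\delta,\eta)=\frac{x}{\vartheta}\bigl(\lambda+\frac{\lambda+2\mu}{2}y^2\bigr)-p_0$ and invoking Theorem~\ref{cauchyprob}(ii) (so $x(r)\to0$, $y(r)\to1/2$) immediately gives $\bar p_\mathrm{rad}(r)\to-p_0$. Your computation and the boundedness remark are accurate, so nothing is missing.
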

With the results of Theorem~\ref{cauchyprob} and Corollary~\ref{coro} at hand we can now discuss the existence of static, spherically symmetric, self-gravitating $n$-body distributions in the Seth model. For this we shall need the following simple lemma.
\begin{lemma}\label{lemmino}
For the Seth model, the conditions $p_\mathrm{rad}(r)=0$ and $\delta(r)\geq0$ hold simultaneously if and only if
\begin{equation}\label{etarho}
\eta(r)\leq \left(\frac{3\lambda+2\mu}{2\lambda}\right)^{3/2},\quad\delta(r)=\eta(r)^{2/3}\sqrt{\frac{2\lambda}{\lambda+2\mu}}\left(\frac{3\lambda+2\mu}{2\lambda}-\eta(r)^{2/3}\right)^{1/2}.
\end{equation}
Moreover at all radii $r>0$ such that $p_\mathrm{rad}(r)=0$ and $\delta(r)\geq0$ there holds
\begin{equation}
p'_\mathrm{rad}(r)=\frac{2\mu}{r}\,\frac{3\lambda+2\mu}{\lambda+2\mu}\,(\eta(r)^{2/3}-1)-\frac{\mathcal{K}}{r^2}\,m(r)\delta(r).\label{plinha}
\end{equation}
In particular if $0<\eta(r)\leq 1$, then $p'_\mathrm{rad}(r)<0$, while if $\delta(r)=0$, then $p'_\mathrm{rad}(r)>0$.
\end{lemma}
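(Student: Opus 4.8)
The plan is to dispatch both assertions by direct computation, the only ingredient being the explicit Seth constitutive functions \eqref{sethmodel} together with the equilibrium equation \eqref{generalss}. For the first equivalence I would impose $\widehat{p}_\mathrm{rad}(\delta,\eta)=0$, i.e.
\[
\lambda\,\eta^{2/3}+\frac{\lambda+2\mu}{2}\,\eta^{-4/3}\delta^2=p_0=\frac{3\lambda+2\mu}{2},
\]
which is linear in $\delta^2$. Solving gives $\delta^2=\dfrac{2\lambda}{\lambda+2\mu}\,\eta^{4/3}\!\left(\dfrac{3\lambda+2\mu}{2\lambda}-\eta^{2/3}\right)$, so that $\delta\ge 0$ is possible if and only if the bracket is non-negative, i.e. $\eta^{2/3}\le\frac{3\lambda+2\mu}{2\lambda}$, which is precisely the stated bound on $\eta$; taking the non-negative square root yields the displayed formula for $\delta$ in \eqref{etarho}.

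For \eqref{plinha} I would use that \eqref{generalss} reads $p'_\mathrm{rad}+2(p_\mathrm{rad}-p_\mathrm{tan})/r+\rho\,m/r^2=0$, so at a radius with $p_\mathrm{rad}(r)=0$ it collapses to $p'_\mathrm{rad}(r)=2p_\mathrm{tan}(r)/r-\mathcal{K}\delta(r)m(r)/r^2$ (recalling $\rho=\mathcal{K}\delta$). It then remains to evaluate $p_\mathrm{tan}=\widehat{p}_\mathrm{tan}(\delta,\eta)$ on the zero set of $\widehat{p}_\mathrm{rad}$. The key step is to eliminate $\delta^2$: from $\widehat{p}_\mathrm{rad}(\delta,\eta)=0$ one gets $\frac{\lambda}{2}\eta^{-4/3}\delta^2=\frac{\lambda}{\lambda+2\mu}\big(p_0-\lambda\eta^{2/3}\big)$, and substituting this into $\widehat{p}_\mathrm{tan}$ from \eqref{sethmodel} and simplifying (using $p_0=\frac{3\lambda+2\mu}{2}$) collapses everything to $p_\mathrm{tan}(r)=\frac{\mu(3\lambda+2\mu)}{\lambda+2\mu}\big(\eta(r)^{2/3}-1\big)$. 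Plugging this into the reduced equilibrium identity gives \eqref{plinha}.

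The sign conclusions then follow by inspection of \eqref{plinha}. If $0<\eta(r)\le1$ the first term is $\le 0$ since $\eta(r)^{2/3}\le1$, while the second term is $\le 0$ since $\mathcal{K}>0$ and $m(r),\delta(r)\ge0$; moreover the explicit formula for $\delta$ shows $\delta(r)>0$ strictly in this range (the square-root factor is positive because $\eta(r)^{2/3}\le1<\frac{3\lambda+2\mu}{2\lambda}$), so, since there is positive mass interior to $r$, in fact $p'_\mathrm{rad}(r)<0$ (and already the first term is strictly negative when $\eta(r)<1$). If instead $\delta(r)=0$, then the second relation in \eqref{etarho} forces $\eta(r)^{2/3}=\frac{3\lambda+2\mu}{2\lambda}>1$, so the first term in \eqref{plinha} is strictly positive while the second term vanishes, giving $p'_\mathrm{rad}(r)>0$. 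There is no genuine obstacle here: the statement is computational, and the only mildly delicate point is carrying out the simplification of $\widehat{p}_\mathrm{tan}$ restricted to $\{\widehat{p}_\mathrm{rad}=0\}$ cleanly and keeping track of when the first inequality is strict rather than merely non-strict.
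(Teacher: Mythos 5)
Your proposal is correct and follows essentially the same route as the paper: solve $\widehat{p}_\mathrm{rad}(\delta,\eta)=0$ as a linear equation in $\delta^2$ to get \eqref{etarho}, then evaluate the equilibrium equation \eqref{generalss} at a radius where $p_\mathrm{rad}=0$ and substitute the resulting expression into $\widehat{p}_\mathrm{tan}$, which indeed collapses to $\frac{\mu(3\lambda+2\mu)}{\lambda+2\mu}\bigl(\eta^{2/3}-1\bigr)$ and yields \eqref{plinha}. Your sign analysis (including noting that $\delta(r)=0$ forces $\eta(r)^{2/3}=\frac{3\lambda+2\mu}{2\lambda}>1$, and that strictness at $\eta=1$ uses positive interior mass) is if anything slightly more explicit than the paper's own argument.
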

\begin{proof}
The first statement follows by solving $\widehat{p}_\mathrm{rad}(\delta(r),\eta(r))=0$ in terms of $\delta(r)$. As to~\eqref{plinha}, we have, for all $r>0$ such that $p_\mathrm{rad}(r)=0$, 
\[
p'_\mathrm{rad}(r)=\frac{2p_\mathrm{tan}(r)}{r}-\rho(r)\frac{m(r)}{r^2}=2\frac{\widehat{p}_\mathrm{tan}(\delta(r),\eta(r))}{r}-\mathcal{K}\delta(r)\frac{m(r)}{r^2}.
\]
Substituting the formula~\eqref{etarho} for $\delta(r)$ in $\widehat{p}_\mathrm{tan}(\delta(r),\eta(r))$ completes the proof of~\eqref{plinha}. 
\end{proof}
We can now prove the existence of static, self-gravitating balls in the Seth model.
\begin{theorem}\label{balltheo}
Let $\rho_c>0$ be given. A unique static, self-gravitating elastic ball $(\rho,p_\mathrm{rad},p_\mathrm{tan})$ in the Seth model with reference density $\mathcal{K}$ and central density $\rho(0)=\rho_c$ exists if and only if $\rho_c>\mathcal{K}$. Moreover, if $r_1>0$ is the radius of the ball, then $\rho(r_1)>0$ and the following estimates hold:
\begin{equation}\label{estr}
\sqrt{\left(\frac{2\lambda}{3\lambda+2\mu}\right)}\left(\frac{3M}{4\pi\mathcal{K}}\right)^{1/3}< r_1< \left(\frac{3M}{4\pi\mathcal{K}}\right)^{1/3},\quad M< \frac{4\pi}{3}\rho_c r_1^3,
\end{equation}
where $M=m(r_1)>0$ is the total mass of the ball.
\end{theorem}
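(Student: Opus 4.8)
The plan is to reduce the construction to the Cauchy problem solved in Theorem~\ref{cauchyprob}. Since a ball satisfies $\eta = m/(\tfrac{4\pi}{3}\mathcal{K}r^3)$, in the variables $(x,y,z)$ of~\eqref{generalssSeth} it corresponds to $z\equiv1$, and a regular solution with regular center and prescribed central density $\rho_c$ corresponds to the data $r_0=0$, $x_0=\vartheta(\rho_c/\mathcal{K})^{2/3}$, $y_0=z_0=1$. First I would rewrite the principal pressures along such a solution: inserting $\delta=y\eta$ and $\eta^{2/3}=x/\vartheta$ into~\eqref{sethmodel} gives $\bar p_\mathrm{rad}(r)=\eta(r)^{2/3}\big(\lambda+\tfrac{\lambda+2\mu}{2}y(r)^2\big)-p_0$ and $p_\mathrm{tan}-p_\mathrm{rad}=\mu\,\eta^{2/3}(1-y^2)\ge0$. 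Evaluating at $r=0$, where $y=1$ and $\eta^{2/3}\to(\rho_c/\mathcal{K})^{2/3}$, yields $p_\mathrm{rad}(0)=p_\mathrm{tan}(0)=p_0\big((\rho_c/\mathcal{K})^{2/3}-1\big)$, so condition (v) of Definition~\ref{nbodydef} holds automatically; since a ball requires $p_\mathrm{rad}(0)>0$, this also shows that such a ball can exist only if $\rho_c>\mathcal{K}$.

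For the converse, assume $\rho_c>\mathcal{K}$ and let $(x,y,z)\in C^1([0,\infty))$ be the unique regular solution of~\eqref{generalssSeth} with the above data, furnished by Theorem~\ref{cauchyprob}; then $z\equiv1$, and by part (i) of that theorem (and the forward invariance of $(0,\infty)^3$ established in its proof) $0<y(r)<1$ and $0<x(r)<x_0$ for all $r>0$. By Corollary~\ref{coro}, $\bar p_\mathrm{rad}(r)\to-p_0<0$ as $r\to\infty$, while $\bar p_\mathrm{rad}(0)>0$; hence $\{r>0:\bar p_\mathrm{rad}(r)=0\}$ is nonempty by the intermediate value theorem, and its infimum $r_1$ is positive and finite with $\bar p_\mathrm{rad}(r_1)=0$ and $\bar p_\mathrm{rad}>0$ on $[0,r_1)$. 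Recovering $(\delta,\eta,m)$, hence $(\rho,p_\mathrm{rad},p_\mathrm{tan})$, on $[0,r_1]$ and extending by zero for $r>r_1$, I would then check the items of Definition~\ref{nbodydef} with $n=1$, $r_0=0$: (ii) is the equivalence of~\eqref{generalssSeth}, \eqref{Setheqs} and~\eqref{generalss} for Seth matter together with the $C^1$-regularity of the solution; $\rho=\mathcal{K}y\eta>0$ on $(0,r_1)$ gives $\Omega=(0,r_1)=I_1$, i.e.\ (i); (iii) follows from $p_\mathrm{rad}>0$ on $(0,r_1)$ and $p_\mathrm{tan}>p_\mathrm{rad}$; (iv) and (v) were verified above; (vi) holds by the extension. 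Uniqueness is immediate: any such ball solves~\eqref{generalssSeth} with $z\equiv1$ and the same initial data, hence coincides with $(x,y,z)$ by the uniqueness part of Theorem~\ref{cauchyprob}, and its outer radius, being the first zero of $\bar p_\mathrm{rad}$, must equal $r_1$.

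It remains to prove~\eqref{estr}. From $\bar p_\mathrm{rad}(r_1)=0$ and the formula for $\bar p_\mathrm{rad}$ one gets the key identity $\eta(r_1)^{2/3}=\frac{3\lambda+2\mu}{2\lambda+(\lambda+2\mu)y(r_1)^2}$; since $y(r_1)\in(0,1)$, the right-hand side lies strictly between its $y\to1$ limit $1$ and its $y\to0$ limit $\tfrac{3\lambda+2\mu}{2\lambda}$, so $1<\eta(r_1)<\big(\tfrac{3\lambda+2\mu}{2\lambda}\big)^{3/2}$. In particular $\eta(r_1)>0$ and $\delta(r_1)=\mathcal{K}^{-1}\rho(r_1)>0$, so $\rho(r_1)>0$ (this can also be read off from Lemma~\ref{lemmino} since $\eta(r_1)^{2/3}<\tfrac{3\lambda+2\mu}{2\lambda}$ strictly). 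Moreover $\eta'=\tfrac{3}{r}\eta(y-1)<0$ for $r>0$, together with the regular center condition $\eta(0^+)=\rho_c/\mathcal{K}$, shows $\eta$ is strictly decreasing, whence $\eta(r_1)<\rho_c/\mathcal{K}$. Writing $\eta(r_1)=M/(\tfrac{4\pi}{3}\mathcal{K}r_1^3)$ with $M=m(r_1)>0$, the three inequalities $\eta(r_1)>1$, $\eta(r_1)<\big(\tfrac{3\lambda+2\mu}{2\lambda}\big)^{3/2}$ and $\eta(r_1)<\rho_c/\mathcal{K}$, solved for $r_1$, give respectively the upper bound, the lower bound, and $M<\tfrac{4\pi}{3}\rho_c r_1^3$ in~\eqref{estr}.

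The only genuinely delicate point is that $\bar p_\mathrm{rad}$ need not be monotone in $r$, so one must work with its \emph{first} zero and carry along the sign information $\bar p_\mathrm{rad}>0$ on $[0,r_1)$; beyond that, the proof is a routine translation between the Euler equation~\eqref{generalss}, the system~\eqref{Setheqs} and its reduction~\eqref{generalssSeth}, combined with the already-established Theorem~\ref{cauchyprob}, Corollary~\ref{coro} and Lemma~\ref{lemmino}.
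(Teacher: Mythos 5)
Your proposal is correct and follows essentially the same route as the paper: reduce to the Cauchy problem of Theorem~\ref{cauchyprob}, use Corollary~\ref{coro} to locate the first zero $r_1$ of $\bar p_\mathrm{rad}$, check $p_\mathrm{tan}>p_\mathrm{rad}$ via $y<1$, and extract~\eqref{estr} from $\widehat p_\mathrm{rad}(\delta(r_1),\eta(r_1))=0$ together with $0<y(r_1)<1$ and the decrease of $\eta$. The only minor deviation is that you obtain $\rho(r_1)>0$ directly from the strict positivity of $y$ and $\eta$ (equivalently from your explicit formula for $\eta(r_1)^{2/3}$), whereas the paper rules out $\delta(r_1)=0$ via the sign of $p'_\mathrm{rad}$ in Lemma~\ref{lemmino}; both arguments are valid.
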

\begin{proof}
Let $(\rho,p_\mathrm{rad},p_\mathrm{tan})$ be a static, self-gravitating elastic ball in the Seth model satisfying $\rho(0)=\rho_c$. By definition it must hold that $p_\mathrm{rad}(0)>0$. Using $\eta(r)\to\delta_c=\rho_c/\mathcal{K}$ as $r\to 0^+$, we have
\[
p_\mathrm{rad}(r)\to p_0\left(\left(\frac{\rho_c}{\mathcal{K}}\right)^{2/3}-1\right),\quad\text{as $r\to 0^+$,}
\]
which shows that the central radial pressure is positive if and only if $\rho_c>\mathcal{K}$. This concludes the proof of the ``only if" portion of the theorem. To prove the logically opposite statement, 
let $(\delta,\eta,m)\in C^1((0,\infty))$ be the unique regular solution of~\eqref{Setheqs} satisfying $\delta(0)=\eta(0)=\delta_c=\rho_c/\mathcal{K}>1$ and $m(0)=0$. Define $\bar{\rho}(r)=\mathcal{K}\,\delta(r)$, $\bar{p}_\mathrm{rad}(r)=\widehat{p}_\mathrm{rad}(\delta(r),\eta(r))$ and $\bar{p}_\mathrm{tan}(r)=\widehat{p}_\mathrm{tan}(\delta(r),\eta(r))$. It follows by Corollary~\ref{coro} that there exists a unique $r_1>0$ such that $\bar{p}_\mathrm{rad}(r_1)=0$ and $\bar{p}_\mathrm{rad}(r)>0$, for $r\in [0,r_1)$. Moreover 
\[
y(0)=\frac{\delta(0)}{\eta(0)}=1,
\]
hence by Theorem~\ref{cauchyprob}(i)
\begin{equation}\label{cavolinola}
y(r)=\frac{\delta(r)}{\eta(r)}<1,\quad\text{for all $r>0$}
\end{equation}
and thus
\begin{align*}
\bar{p}_\mathrm{tan}(r)&=\bar{p}_\mathrm{rad}(r)+\bar{p}_\mathrm{tan}(r)-\bar{p}_\mathrm{rad}(r)\\
&=\bar{p}_\mathrm{rad}(r)+\mu\eta(r)^{2/3}(1-y(r))
> \bar{p}_\mathrm{rad}(r)\geq0,\quad \text{for $r\in (0,r_1]$}.
\end{align*}
Hence $(\rho,p_\mathrm{rad},p_\mathrm{tan})=(\bar{\rho},\bar{p}_\mathrm{rad},\bar{p}_\mathrm{tan})\mathbb{I}_{r\leq r_1}$ defines a self-gravitating ball of radius $r_1$ and mass $M=m(r_1)$, which is uniquely determined by $\rho_c$ due to the uniqueness of the solution $(\delta,\eta,m)$ of~\eqref{generalssSeth} with data $\delta(0)=\eta(0)=\delta_c$, $m(0)=0$.
 If $\delta(r_1)=0$, then $\lim_{r\to r_1^+}p'_\mathrm{rad}(r)>0$, see Lemma~\ref{lemmino}, which is clearly impossible. Hence $\delta(r_1)>0$, i.e., $\rho(r_1)>0$, must hold. Thus
\[
\eta(r_1)=\frac{M}{\frac{4\pi}{3}\mathcal{K} r_1^3}<\left(\frac{3\lambda+2\mu}{2\lambda}\right)^{3/2},
\]
which gives the lower bound in~\eqref{estr}. As to the upper bound, we have
\begin{align*}
0&=\widehat{p}_\mathrm{rad}(\delta(r_1),\eta(r_1))=\lambda\eta(r_1)^{2/3}+\frac{\lambda+2\mu}{2}\eta(r_1)^{-4/3}\delta(r_1)^2-p_0\\
&=\eta^{2/3}(r_1)[(p_0-\lambda)(y(r_1)^2-1)+p_0(1-\eta(r_1)^{-2/3})].
\end{align*}
Using $p_0>\lambda$ and $y(r_1)<1$, see~\eqref{cavolinola}, we obtain $1-\eta(r_1)^{-2/3}>0$, i.e., $\eta(r_1)>1$, which gives the upper bound in~\eqref{estr}. Finally, Theorem~\ref{cauchyprob}(i) gives $\eta(r_1)< \eta(0)=\rho_c/\mathcal{K}$, hence
\[
M=m(r_1)=\frac{4\pi}{3}\mathcal{K}\,r_1^3\,\eta(r_1)< \frac{4\pi}{3}\rho_cr_1^3.
\]
\end{proof}
We now use a similar argument to prove the existence of single shells of Seth elastic matter.
\begin{theorem}\label{singleshelltheo}
A unique single static, self-gravitating elastic shell $(\rho,p_\mathrm{rad},p_\mathrm{tan})$ in the Seth model with reference parameters $\mathcal{K},\mathcal{S}$ and inner radius $r_0>0$ exists if and only if
\begin{equation}\label{necShell}
r_\mathrm{min}:=\left(\frac{2\lambda}{3\lambda+2\mu}\right)^{1/2}\mathcal{S}\leq r_0<\mathcal{S}.
\end{equation}
Moreover $\rho(r_0)=0$ if and only if $r_0=r_\mathrm{min}$, while the outer radius $r_1$ of the shell satisfies $\rho(r_1)>0$ and
\begin{equation}\label{estrshell}
\sqrt{\frac{2\lambda}{3\lambda+2\mu}}\left(\mathcal{S}^3+\frac{3M}{4\pi\mathcal{K}}\right)^{1/3}<r_1<\left(\mathcal{S}^3+\frac{3M}{4\pi\mathcal{K}}\right)^{1/3},
\end{equation}
where $M$ is the total mass of the shell.
\end{theorem}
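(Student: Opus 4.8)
The plan is to mirror the proof of Theorem~\ref{balltheo}, replacing the regular‑center data by the shell data prescribed at $r=r_0$, namely $\eta(r_0)=(\mathcal{S}/r_0)^3$, $m(r_0)=0$, and $\delta(r_0)\ge 0$ chosen so that $\widehat{p}_\mathrm{rad}(\delta(r_0),\eta(r_0))=0$; the two reference constants $\mathcal{K},\mathcal{S}$ are fixed throughout.

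For necessity, suppose a Seth shell with reference parameters $\mathcal{K},\mathcal{S}$ and inner radius $r_0$ exists. Since $m(r_0)=0$ one has $\eta(r_0)=(\mathcal{S}/r_0)^3$, while condition~(iv) of Definition~\ref{nbodydef} gives $\widehat{p}_\mathrm{rad}(\delta(r_0),\eta(r_0))=0$ with $\delta(r_0)=\rho(r_0)/\mathcal{K}\ge 0$; by Lemma~\ref{lemmino} this is possible only if $\eta(r_0)\le(\tfrac{3\lambda+2\mu}{2\lambda})^{3/2}$, i.e.\ $r_0\ge r_\mathrm{min}$, and $\delta(r_0)=0$ occurs exactly when $\eta(r_0)=(\tfrac{3\lambda+2\mu}{2\lambda})^{3/2}$, i.e.\ $r_0=r_\mathrm{min}$, which already yields the assertion $\rho(r_0)=0\iff r_0=r_\mathrm{min}$. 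Because $p_\mathrm{rad}>0$ on $\Omega=(r_0,r_1)$ with $p_\mathrm{rad}(r_0)=0$, the one‑sided derivative obtained from~\eqref{generalss} must satisfy $p'_\mathrm{rad}(r_0)\ge 0$; if $\delta(r_0)>0$ then~\eqref{plinha} (with $m(r_0)=0$) reads $p'_\mathrm{rad}(r_0)=\tfrac{2\mu}{r_0}\tfrac{3\lambda+2\mu}{\lambda+2\mu}\bigl(\eta(r_0)^{2/3}-1\bigr)$, forcing $\eta(r_0)\ge 1$, i.e.\ $r_0\le\mathcal{S}$; and the borderline $r_0=\mathcal{S}$, where $\eta(r_0)=\delta(r_0)=1$ and $p'_\mathrm{rad}(r_0)=0$, will be ruled out by differentiating~\eqref{generalss} once more to obtain $p''_\mathrm{rad}(r_0)=-4\pi\mathcal{K}^2<0$. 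This establishes~\eqref{necShell}.

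For sufficiency I would fix $r_0\in[r_\mathrm{min},\mathcal{S})$, impose the data above (so $\eta(r_0)=(\mathcal{S}/r_0)^3>1$), and first check, using the explicit $\delta(r_0)$ from~\eqref{etarho}, that $\delta(r_0)<\eta(r_0)\iff\eta(r_0)>1$, hence $y(r_0)=\delta(r_0)/\eta(r_0)<1$. When $r_0>r_\mathrm{min}$, $\delta(r_0)>0$ and Theorem~\ref{cauchyprob} supplies a unique regular solution $(x,y,z)\in C^1([r_0,\infty))$ of~\eqref{generalssSeth} with $y<1$ on $(r_0,\infty)$ by part~(i). Writing $\bar{\rho}=\mathcal{K}\delta$, $\bar{p}_\mathrm{rad}=\widehat{p}_\mathrm{rad}(\delta,\eta)$, $\bar{p}_\mathrm{tan}=\widehat{p}_\mathrm{tan}(\delta,\eta)$, Corollary~\ref{coro} gives $\bar{p}_\mathrm{rad}(r)\to-p_0<0$ as $r\to\infty$, while $p'_\mathrm{rad}(r_0)>0$ (Lemma~\ref{lemmino}) makes $\bar{p}_\mathrm{rad}>0$ just to the right of $r_0$, so there is a first zero $r_1\in(r_0,\infty)$. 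On $(r_0,r_1)$ one has $\bar{p}_\mathrm{rad}>0$ and $\bar{p}_\mathrm{tan}-\bar{p}_\mathrm{rad}=\mu\,\eta^{2/3}(1-y^2)>0$, so $\bar{p}_\mathrm{tan}>\bar{p}_\mathrm{rad}>0$, and $(\bar{\rho},\bar{p}_\mathrm{rad},\bar{p}_\mathrm{tan})\mathbb{I}_{r_0\leq r\leq r_1}$ is a single self‑gravitating Seth shell, unique because the data at $r_0$ and the solution of~\eqref{generalssSeth} are unique. If $\delta(r_1)=0$ then~\eqref{plinha} would give $p'_\mathrm{rad}(r_1)>0$, contradicting $\bar{p}_\mathrm{rad}>0$ on $(r_0,r_1)$; hence $\rho(r_1)>0$, and Lemma~\ref{lemmino} forces $\eta(r_1)<(\tfrac{3\lambda+2\mu}{2\lambda})^{3/2}$. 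Since $\eta(r_1)=r_1^{-3}\bigl(\mathcal{S}^3+\tfrac{3M}{4\pi\mathcal{K}}\bigr)$ with $M=m(r_1)$, this is the lower bound in~\eqref{estrshell}; for the upper bound I would rewrite $0=\widehat{p}_\mathrm{rad}(\delta(r_1),\eta(r_1))=\eta(r_1)^{2/3}\bigl[(p_0-\lambda)(y(r_1)^2-1)+p_0(1-\eta(r_1)^{-2/3})\bigr]$ and, using $p_0>\lambda$ and $y(r_1)<1$, conclude $\eta(r_1)>1$, i.e.\ $r_1<\bigl(\mathcal{S}^3+\tfrac{3M}{4\pi\mathcal{K}}\bigr)^{1/3}$.

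The main obstacle will be the endpoint $r_0=r_\mathrm{min}$, where $\delta(r_0)=0$, the coefficients $\theta_1,\theta_2$ in~\eqref{Setheqs} blow up like $1/\delta$, and Theorem~\ref{cauchyprob} does not apply directly because $y_0=0$. I would handle this through the autonomous system~\eqref{newdynsys} in the variables $(u,y,z)$: the initial point $(r_0x(r_0),0,0)$ lies on the boundary of $\mathcal{V}=(0,\infty)^3$ but is a \emph{regular} point of the smooth vector field~\eqref{newdynsys} (there $dy/d\xi=a>0$), so the orbit through it is unique, immediately enters $\mathcal{V}$, and, after undoing the substitution $d/d\xi=ry\,d/dr$, produces a unique regular solution of~\eqref{Setheqs} on $[r_\mathrm{min},\infty)$ with $\delta(r)\sim c\,(r-r_\mathrm{min})^{1/2}$ near $r_\mathrm{min}$, which is compatible with Definition~\ref{regsol} since $C^1$ is only demanded on the open interval. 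Once this is in place, parts~(i)--(ii) of Theorem~\ref{cauchyprob} apply verbatim and the remainder of the argument above goes through, now with $y(r_0)=0$.
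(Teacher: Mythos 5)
Your proposal is correct and takes essentially the same route as the paper: necessity from Lemma~\ref{lemmino} and the sign of $p_\mathrm{rad}'(r_0)$, with the borderline $r_0=\mathcal{S}$ excluded by the negative second derivative, and sufficiency by solving~\eqref{generalssSeth} from the shell data at $r_0$, using Corollary~\ref{coro} for the first zero $r_1$ and Theorem~\ref{cauchyprob}(i) for $y<1$, together with the same algebraic identities yielding~\eqref{estrshell}. Your explicit treatment of the endpoint $r_0=r_\mathrm{min}$ (where $y_0=0$ falls outside the stated hypotheses of Theorem~\ref{cauchyprob}) via the regular point $(u_0,0,0)$ of the autonomous system~\eqref{newdynsys} supplies a detail the paper passes over silently, and the minor numerical discrepancy in the second derivative ($-4\pi\mathcal{K}^2$ versus the paper's $-4\pi\mathcal{K}$) is immaterial since only its sign is used.
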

\begin{proof}
If a single shell with inner radius $r_0>0$ exists, then it must hold
\begin{subequations}\label{necshelltemp}
\begin{equation}\label{cavolinaccio1}
\eta(r_0)=\left(\frac{\mathcal{S}}{r_0}\right)^3,\quad \delta(r_0)=\left(\frac{\mathcal{S}}{r_0}\right)^{2}\sqrt{\frac{2\lambda}{\lambda+2\mu}}\left(\frac{3\lambda+2\mu}{2\lambda}-\left(\frac{\mathcal{S}}{r_0}\right)^{2}\right)^{1/2}\quad (\text{i.e.,  $p_\mathrm{rad}(r_0)=0$}),
\end{equation}
and
\begin{equation}\label{cavolinaccio}
p_\mathrm{rad}'(r_0)=\frac{2\mu}{r_0}\,\frac{3\lambda+2\mu}{\lambda+2\mu}\,\left[\left(\frac{\mathcal{S}}{r_0}\right)^{2/3}-1\right]>0,\quad\text{or}\quad \left(r_0=\mathcal{S}\ \text{and}\ p_\mathrm{rad}''(\mathcal{S})\geq 0\right).
\end{equation}
\end{subequations}
A simple computation shows that, if~\eqref{cavolinaccio1} holds with $r_0=\mathcal{S}$, then $p_\mathrm{rad}''(\mathcal{S})=-4\pi\mathcal{K}$, hence the second condition in~\eqref{cavolinaccio} is never satisfied together with~\eqref{cavolinaccio1}. Thus~\eqref{necshelltemp} implies
\[
1<\frac{\mathcal{S}}{r_0}\leq \left(\frac{3\lambda+2\mu}{2\lambda}\right)^{1/2},
\]
which in turn is equivalent to~\eqref{necShell}. Now, given $r_0$ satisfying~\eqref{necShell}, define $\eta(r_0)$ and $\delta(r_0)$ as in~\eqref{cavolinaccio1}, 
so that $p_\mathrm{rad}(r_0)=0$ and $p'_\mathrm{rad}(r_0)>0$. Let $(\delta,\eta,m)\in C^1((r_0,\infty))$ be the unique regular solution of~\eqref{Setheqs} with initial data $(\delta(r_0),\eta(r_0), m(r_0)=0)$.  Define $\bar{\rho}(r)=\mathcal{K}\,\delta(r)$, $\bar{p}_\mathrm{rad}(r)=\widehat{p}_\mathrm{rad}(\delta(r),\eta(r))$ and $\bar{p}_\mathrm{tan}(r)=\widehat{p}_\mathrm{tan}(\delta(r),\eta(r))$. It follows by Corollary~\ref{coro} that there exists a unique $r_1>0$ such that $\bar{p}_\mathrm{rad}(r_1)=0$ and $\bar{p}_\mathrm{rad}(r)>0$, for $r\in (r_0,r_1)$. Moreover, using $\eta(r_0)>1$, and letting $y(r_0)=\delta(r_0)/\eta(r_0)$, we have
\begin{align*}
0&=\bar{p}_\mathrm{rad}(r_0)=\widehat{p}_\mathrm{rad}(\delta(r_0),\eta(r_0))=\lambda\eta(r_0)^{2/3}+\frac{\lambda+2\mu}{2}\eta(r_0)^{-4/3}\delta(r_0)^2-p_0\\
&=\eta^{2/3}(r_0)[(p_0-\lambda)(y(r_0)^2-1)+p_0(1-\eta(r_0)^{-2/3})]>\eta^{2/3}(r_0)(p_0-\lambda)(y(r_0)^2-1),
\end{align*}
hence $0<y(r_0)<1$. By Theorem~\ref{cauchyprob}(i)
\begin{equation}\label{cavolino}
y(r)=\frac{\delta(r)}{\eta(r)}<1,\quad\text{for all $r>r_0$}
\end{equation}
and thus
\begin{align*}
\bar{p}_\mathrm{tan}(r)&=\bar{p}_\mathrm{rad}(r)+\bar{p}_\mathrm{tan}(r)-\bar{p}_\mathrm{rad}(r)\\
&=\bar{p}_\mathrm{rad}(r)+\mu\eta(r)^{2/3}(1-y(r))
> \bar{p}_\mathrm{rad}(r)\geq0,\quad \text{for $r\in [r_0,r_1]$}.
\end{align*}
Hence $(\rho,p_\mathrm{rad},p_\mathrm{tan})=(\bar{\rho},\bar{p}_\mathrm{rad},\bar{p}_\mathrm{tan})\mathbb{I}_{r_0\leq r\leq r_1}$ defines a self-gravitating shell supported in the interval $[r_0,r_1]$ and with mass $M=m(r_1)$, which is uniquely determined by $r_0$ due to the uniqueness of the solution $(\delta,\eta,m)$ of~\eqref{generalssSeth} with initial data $(\delta(r_0),\eta(r_0),m(r_0))$. The proof of~\eqref{estrshell} is identical to that of~\eqref{estr}. In fact if $\rho(r_1)=0$, then Lemma~\ref{lemmino} gives $\lim_{r\to r_1^+}p'_\mathrm{rad}(r)>0$, which is not possible. Hence
\[
\delta(r_1)>0,\quad\text{i.e.}\quad \eta(r_1)=\left(\frac{\mathcal{S}}{r_1}\right)^3+\frac{3M}{4\pi\mathcal{K}r_1^3}<\left(\frac{3\lambda+2\mu}{2\lambda}\right)^{3/2},
\] 
which gives the lower bound on $r_1$. As to the upper bound, we have
\begin{align*}
0&=\widehat{p}_\mathrm{rad}(\delta(r_1),\eta(r_1))=\lambda\eta(r_1)^{2/3}+\frac{\lambda+2\mu}{2}\eta(r_1)^{-4/3}\delta(r_1)^2-p_0\\
&=\eta^{2/3}(r_1)[(p_0-\lambda)(y(r_1)^2-1)+p_0(1-\eta(r_1)^{-2/3})].
\end{align*}
Using $p_0>\lambda$ and $y(r_1)<1$, see~\eqref{cavolino}, we obtain $1-\eta(r_1)^{-2/3}>0$, i.e., $\eta(r_1)>1$, which gives the upper bound in~\eqref{estr}. 
\end{proof}
The next theorem generalizes Theorem~\ref{singleshelltheo} to the case when the region surrounded by the shell of Seth matter is not vacuum. 
\begin{theorem}\label{M+shell}
Let $n\in\mathbb{N}$, $n>1$, and $\mathfrak{B}_1+\dots+\mathfrak{B}_n$ be a $n$-body static, spherically symmetric, self-gravitating matter distribution. Assume that $\mathfrak{B}_n$ is a shell of Seth elastic matter with Lam\'e coefficients $\lambda,\mu$ and with reference parameters $\mathcal{K},\mathcal{S}$. Then the inner radius $r_{2n-2}$ of $\mathfrak{B}_n$ satisfies
\begin{equation}\label{necShell2}
r_\mathrm{min}:=\left(\frac{2\lambda}{3\lambda+2\mu}\right)^{1/2}\mathcal{S}\leq r_{2n-2}<\mathcal{S}.
\end{equation}
In particular $\mathcal{S}>r_{2n-3}$ must hold.
Conversely,  if $r_\mathrm{min}>r_{2n-3}$,
then
there exists $r_\mathrm{max}\in(r_\mathrm{min},\mathcal{S})$ such that for all given $r_{2n-2}\in [r_\mathrm{min},r_\mathrm{max})$ there exists a unique $r_{2n-1}>r_{2n-2}$ and a unique single shell $\mathfrak{B}_{n}=(\rho,p_\mathrm{rad},p_\mathrm{tan})$ of the given Seth elastic material supported in the interval $[r_{2n-2},r_{2n-1}]$, such that $\mathfrak{B}_1+\dots+\mathfrak{B}_{n}$ is a static, spherically symmetric, self-gravitating $n$-body matter distribution.  Moreover $\rho(r_{2n-2})=0$ if and only if $r_{2n-2}=r_\mathrm{min}$, while the outer radius $r_{2n-1}$ of the elastic shell $\mathfrak{B}_{n}$ satisfies $\rho(r_{2n-1})>0$ and 
\begin{equation}\label{estrshell2}
\sqrt{\frac{2\lambda}{3\lambda+2\mu}}\left(\mathcal{S}^3+\frac{3M_{n}}{4\pi\mathcal{K}}\right)^{1/3}<r_{2n-1}<\left(\mathcal{S}^3+\frac{3M_{n}}{4\pi\mathcal{K}}\right)^{1/3},
\end{equation}
where $M_{n}$ is the mass of $\mathfrak{B}_{n}$.
\end{theorem}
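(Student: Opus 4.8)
The plan is to follow the proof of Theorem~\ref{singleshelltheo} almost verbatim; the one new feature is that the shell $\mathfrak{B}_n$ now encloses the mass $M:=M_1+\dots+M_{n-1}=m(r_{2n-3})>0$ of the inner $(n-1)$-body distribution, which enters the gravitational term of~\eqref{Setheqs} and converts the single admissible point $\mathcal{S}$ of Theorem~\ref{singleshelltheo} into a nondegenerate interval $[r_\mathrm{min},r_\mathrm{max})$ of inner radii. For the necessity of~\eqref{necShell2} I would argue that $p_\mathrm{rad}(r_{2n-2})=0$ with $\delta(r_{2n-2})\ge0$ forces, by the first part of Lemma~\ref{lemmino}, $\eta(r_{2n-2})=(\mathcal{S}/r_{2n-2})^3\le(\tfrac{3\lambda+2\mu}{2\lambda})^{3/2}$, i.e.\ $r_{2n-2}\ge r_\mathrm{min}$; and that, since $p_\mathrm{rad}>0$ just inside $I_n$, one needs $\lim_{r\to r_{2n-2}^+}p'_\mathrm{rad}(r)\ge0$, whereas~\eqref{plinha} with $m(r_{2n-2})=M>0$ gives $p'_\mathrm{rad}(r_{2n-2})<0$ whenever $\eta(r_{2n-2})\le1$, i.e.\ whenever $r_{2n-2}\ge\mathcal{S}$. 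Hence $r_{2n-3}<r_{2n-2}<\mathcal{S}$, which in particular forces $\mathcal{S}>r_{2n-3}$.

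For the converse, assuming $r_\mathrm{min}>r_{2n-3}$, I would proceed as follows. Given $r_{2n-2}\in[r_\mathrm{min},\mathcal{S})$ set $\eta_0=(\mathcal{S}/r_{2n-2})^3$ and let $\delta_0=\delta_0(r_{2n-2})\ge0$ be the unique root of $\widehat{p}_\mathrm{rad}(\cdot,\eta_0)=0$ furnished by~\eqref{etarho}. The central object is the function
\[
g(r_{2n-2}):=\frac{2\mu}{r_{2n-2}}\,\frac{3\lambda+2\mu}{\lambda+2\mu}\big(\eta_0^{2/3}-1\big)-\frac{\mathcal{K}M}{r_{2n-2}^{\,2}}\,\delta_0,
\]
which by~\eqref{plinha} equals $p'_\mathrm{rad}(r_{2n-2})$ for the solution launched from these data with $m(r_{2n-2})=M$. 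I would check that $g$ is continuous on $[r_\mathrm{min},\mathcal{S})$; that at $r_\mathrm{min}$, where $\delta_0=0$ and $\eta_0^{2/3}=\tfrac{3\lambda+2\mu}{2\lambda}>1$, one has $g(r_\mathrm{min})=\tfrac{\mu(3\lambda+2\mu)}{\lambda r_\mathrm{min}}>0$; and that $g(r_{2n-2})\to-\mathcal{K}M/\mathcal{S}^{2}<0$ as $r_{2n-2}\to\mathcal{S}^-$ (there $\eta_0\to1$, $\delta_0\to1$). I would then define $r_\mathrm{max}\in(r_\mathrm{min},\mathcal{S})$ to be the infimum of the zero set of $g$, so that $g>0$ throughout $[r_\mathrm{min},r_\mathrm{max})$. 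Producing this sign picture is essentially the whole content of the theorem.

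Having fixed $r_{2n-2}\in[r_\mathrm{min},r_\mathrm{max})$, I would invoke Theorem~\ref{cauchyprob} with $r_0=r_{2n-2}>0$, $x_0=\vartheta\eta_0^{2/3}$, $y_0=\delta_0/\eta_0$ and $z_0=\tfrac{3M}{4\pi\mathcal{K}\mathcal{S}^3}$ (the degenerate endpoint $\delta_0=0$ at $r_{2n-2}=r_\mathrm{min}$ being handled as in Theorem~\ref{singleshelltheo}) to obtain a unique regular solution $(\delta,\eta,m)$ of~\eqref{Setheqs} on $[r_{2n-2},\infty)$ with these data and $m(r_{2n-2})=M$, and from there run the argument of Theorem~\ref{singleshelltheo}: with $\bar\rho=\mathcal{K}\delta$, $\bar p_\mathrm{rad}=\widehat p_\mathrm{rad}(\delta,\eta)$, $\bar p_\mathrm{tan}=\widehat p_\mathrm{tan}(\delta,\eta)$, the relations $\bar p_\mathrm{rad}(r_{2n-2})=0$, $\bar p'_\mathrm{rad}(r_{2n-2})=g(r_{2n-2})>0$ and $\bar p_\mathrm{rad}\to-p_0<0$ (Corollary~\ref{coro}) give a first zero $r_{2n-1}\in(r_{2n-2},\infty)$ of $\bar p_\mathrm{rad}$ with $\bar p_\mathrm{rad}>0$ on $(r_{2n-2},r_{2n-1})$; rewriting $\widehat p_\mathrm{rad}(\delta_0,\eta_0)=0$ as $\eta_0^{2/3}[(p_0-\lambda)(y_0^2-1)+p_0(1-\eta_0^{-2/3})]=0$ and using $\eta_0>1$, $p_0>\lambda$ gives $0<y_0<1$, so Theorem~\ref{cauchyprob}(i) yields $\delta/\eta<1$ on $(r_{2n-2},\infty)$ and hence $\bar p_\mathrm{tan}=\bar p_\mathrm{rad}+\mu\eta^{2/3}(1-\delta/\eta)>\bar p_\mathrm{rad}\ge0$ on $(r_{2n-2},r_{2n-1}]$; $\delta(r_{2n-1})=0$ is excluded because by Lemma~\ref{lemmino} it would force $p'_\mathrm{rad}(r_{2n-1})>0$, impossible; and~\eqref{estrshell2} follows from~\eqref{etarho} and from $\eta^{2/3}(r_{2n-1})[(p_0-\lambda)(y(r_{2n-1})^2-1)+p_0(1-\eta(r_{2n-1})^{-2/3})]=0$ exactly as~\eqref{estrshell} was obtained. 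Finally I would set $\mathfrak{B}_n:=(\bar\rho,\bar p_\mathrm{rad},\bar p_\mathrm{tan})\mathbb{I}_{r_{2n-2}\le r\le r_{2n-1}}$ and, using $r_{2n-3}<r_\mathrm{min}\le r_{2n-2}$ and the matching of $\bar p_\mathrm{rad}$ and $m$ across the vacuum gap at $r_{2n-2}$, verify that $\mathfrak{B}_1+\dots+\mathfrak{B}_n$ satisfies all of Definition~\ref{nbodydef}; uniqueness of $r_{2n-1}$ and of $\mathfrak{B}_n$ is inherited from Theorem~\ref{cauchyprob}, and $\rho(r_{2n-2})=\mathcal{K}\delta_0$ vanishes precisely when $\eta_0^{2/3}=\tfrac{3\lambda+2\mu}{2\lambda}$, i.e.\ when $r_{2n-2}=r_\mathrm{min}$.

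I expect the main obstacle to be the sign analysis of $g$ producing $r_\mathrm{max}\in(r_\mathrm{min},\mathcal{S})$, since everything downstream is a transcription of the vacuum-core case; the two bookkeeping points I would watch are that the hypothesis $r_\mathrm{min}>r_{2n-3}$ is exactly what preserves the ordering $r_{2n-3}<r_{2n-2}$ required by Definition~\ref{nbodydef} over all of $[r_\mathrm{min},r_\mathrm{max})$, and that the mass equation of~\eqref{Setheqs} for $\mathfrak{B}_n$ must be initialized with $m(r_{2n-2})=M$ (whence the value of $z_0$ above), not with $0$ as in Theorem~\ref{singleshelltheo}.
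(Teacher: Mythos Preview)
Your proposal is correct and follows essentially the same route as the paper's proof: your function $g$ is the paper's $F$, and your sign analysis at the endpoints $r_\mathrm{min}$ and $\mathcal{S}^-$ is the explicit version of the paper's one-line ``since $F(r_\mathrm{min})>0$, then by continuity there exists $r_\mathrm{max}\in(r_\mathrm{min},\mathcal{S})$ such that $F(r)>0$, for all $r\in[r_\mathrm{min},r_\mathrm{max})$''. The downstream argument (invoking Theorem~\ref{cauchyprob}, Corollary~\ref{coro}, checking $0<y_0<1$, and deriving~\eqref{estrshell2}) is, as you anticipated, a transcription of Theorem~\ref{singleshelltheo}, which the paper simply cites rather than repeating.
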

\begin{proof}
At the inner radius $r_{2n-2}$ of $\mathfrak{B}_n$ it must hold  
\[
\eta(r_{2n-2})=\left(\frac{\mathcal{S}}{r_{2n-2}}\right)^3,\quad \delta(r_{2n-2})=\left(\frac{\mathcal{S}}{r_{2n-2}}\right)^{2}\sqrt{\frac{2\lambda}{\lambda+2\mu}}\left(\frac{3\lambda+2\mu}{2\lambda}-\left(\frac{\mathcal{S}}{r_{2n-2}}\right)^{2}\right)^{1/2},
\]
so that $p_\mathrm{rad}(r_{2n-2})=0$, and
\[
p'_\mathrm{rad}(r_{2n-2})=\frac{2\mu}{r_{2n-2}}\,\frac{3\lambda+2\mu}{\lambda+2\mu}\left[\left(\frac{\mathcal{S}}{r_{2n-2}}\right)^2-1\right]-\frac{\mathcal{K}\sum_{k=1}^{n-1}M_k}{r_{2n-2}^2}\,\delta(r_{2n-2}):=F(r_{2n-2})\geq0,
\]
from which the  necessary condition~\eqref{necShell2} follows immediately. Since $F(r_\mathrm{min})>0$, then by continuity there exists $r_\mathrm{max}\in(r_\mathrm{min},\mathcal{S})$ such that $F(r)>0$, for all $r\in[r_\mathrm{min},r_\mathrm{max})$. Hence, provided $\mathcal{S}$ is so that $r_\mathrm{min}>r_{2n-3}$ and choosing $r_{2n-2}\in[r_\mathrm{min},r_\mathrm{max})$, the (unique) solution of~\eqref{generalssSeth} with initial data $(\eta(r_{2n-2}),\delta(r_{2n-2}),m(r_{2n-2})=M_1+\dots M_{n-1})$ gives rise to a shell supported in $[r_{2n-2},r_{2n-1}]$, where $r_{2n-1}$ is the first radius at which the radial pressure vanishes (and which exists by Corollary~\ref{coro}). Finally, as $r_\mathrm{min}\leq r_{2n-2}<\mathcal{S}$, the proof of~\eqref{estrshell2} is identical to the proof of~\eqref{estrshell}.
\end{proof} 

The previous theorem can be used to construct spherically symmetric, static, self-gravitating $n$-body matter distributions in the Seth model recursively, which is particular useful for, e.g., numerical simulations. In fact, suppose that $\mathfrak{B}_1$ is a ball of Seth elastic matter with Lam\'e coefficients $(\lambda^{(1)},\mu^{(1)})$ and with reference density $\mathcal{K}^{(1)}$. The existence of such ball is proved in Theorem~\ref{balltheo}. Alternatively, $\mathfrak{B}_1$ could be a single shell of the given Seth material and with reference parameters $(\mathcal{K}^{(1)},\mathcal{S}^{(1)})$, see Theorem~\ref{singleshelltheo}. The (outer) radius $r_1$ of $\mathfrak{B}_1$ depends on the Lam\'e coefficients and the reference parameters of $\mathfrak{B}_1$, that is 
\[
r_1=r_1(\lambda^{(1)},\mu^{(1)},\mathcal{K}^{(1)},\mathcal{S}^{(1)}).
\]
By Theorem~\ref{M+shell} we can add a shell $\mathfrak{B}_2$ of Seth matter around $\mathfrak{B}_1$ if the Lam\'e coefficients $(\lambda^{(2)},\mu^{(2)})$ and the reference inner radius $\mathcal{S}^{(2)}$ of $\mathfrak{B}_2$ satisfy
\[
\left(\frac{2\lambda^{(2)}}{3\lambda^{(2)}+2\mu^{(2)}}\right)^{1/2}\mathcal{S}^{(2)}>r_{1}.
\]
The outer radius of $\mathfrak{B}_2$ depends on the Lam\'e coefficients and the reference parameters of $\mathfrak{B}_1,\mathfrak{B}_2$, that is 
\[
r_3=r_3(\lambda^{(1)},\lambda^{(2)},\mu^{(1)},\mu^{(2)},\mathcal{K}^{(1)},\mathcal{K}^{(2)},\mathcal{S}^{(1)},\mathcal{S}^{(2)}).
\]
Similarly, we can place a shell $\mathfrak{B}_3$ of Seth matter around $\mathfrak{B}_1+\mathfrak{B}_2$ if the Lam\'e coefficients $(\lambda^{(3)},\mu^{(3)})$ and the reference parameters $(\mathcal{K}^{(3)},\mathcal{S}^{(3)})$ of $\mathfrak{B}_3$ satisfy 
\[
\left(\frac{2\lambda^{(3)}}{3\lambda^{(3)}+2\mu^{(3)}}\right)^{1/2}\,\mathcal{S}^{(3)}>r_{3},
\]
and so on. 
The general sufficient condition to add a shell $\mathfrak{B}_{j+1}$ of Seth matter around a $j$-body matter distribution  $\mathfrak{B}_1+\dots+\mathfrak{B}_{j}$ of Seth matter, $j\geq 1$, is that the Lam\'e coefficients $(\lambda^{(j+1)},\mu^{(j+1)})$ and the reference parameters $(\mathcal{K}^{(j+1)},\mathcal{S}^{(j+1)})$ of $\mathfrak{B}_{j+1}$ satisfy
\[
\left(\frac{2\lambda^{(j+1)}}{3\lambda^{(j+1)}+2\mu^{(j+1)}}\right)^{1/2}\mathcal{S}^{(j+1)}>r_{2j-1},
\]
where
\[
r_{2j-1}=r_{2j-1}(\lambda^{(1)},\dots,\lambda^{(j)},\mu^{(1)},\dots,\mu^{(j)},\mathcal{K}^{(1)},\dots,\mathcal{K}^{(j)},\mathcal{S}^{(1)},\dots,\mathcal{S}^{(j)})
\]
is the outer radius of the shell $\mathfrak{B}_j$. We remark that this result applies in particular when the bodies are made of the same Seth elastic material, i.e., when they all have the same Lam\'e coefficients $\lambda,\mu$. In this case Theorem~\ref{M+shell} entails that a static, spherically symmetric, $n$-body matter distribution exists provided the reference inner radii $\mathcal{S}^{(1)},\dots,\mathcal{S}^{(n)}$ of the bodies satisfy
 \[
S^{(j)}>\left(\frac{3\lambda+2\mu}{2\lambda}\right)^{1/2}r_{2j-3}(\lambda,\mu,\mathcal{K}^{(1)},\dots,\mathcal{K}^{(j-1)},\mathcal{S}^{(1)},\dots,\mathcal{S}^{(j-1)}),\quad j=2,\dots, n.
\]
In particular, the bodies are allowed to have the same reference density, i.e.,
 \[
\mathcal{K}^{(1)}=\dots=\mathcal{K}^{(n)}=\mathcal{K}.
\]

\section*{Acknowledgements}
A.~A. is supported by the project (GPSEinstein) PTDC/MAT-ANA/1275/2014,
by CAMGSD, Instituto Superior T{\'e}cnico, through FCT/Portugal 
UID/MAT/04459/2013, and by the FCT Grant No. SFRH/BPD/85194/2012. 
Furthermore, A.~A.  thanks the Department of Mathematics at Chalmers University, Sweden, for the very kind hospitality. 

\begin{appendices}

\section{Appendix: Static self-gravitating elastic matter}\label{ssg}
We begin by recalling how the stress tensor $\sigma$ of an elastic body is defined in the Lagrangian formulation of continuum mechanics. 
Let $\mathcal{B}\subset\R^3$ be a non-empty, open, bounded, connected set with smooth boundary. The region $\mathcal{B}$ is called material manifold and corresponds to the domain occupied by the body in a given reference state.
A static (i.e., time independent) deformation of the body is described by a $C^2$ map $\psi:\overline{\mathcal{B}}\to\R^3$, such that $\psi$ is injective (except possibly at the boundary) and preserves orientation, i.e., $\det F>0$, where $F=\nabla\psi$ is the deformation gradient. The region $\psi(\overline{\mathcal{B}})$ identifies the domain occupied by the matter distribution in its physical state, i.e., after it has been deformed. 
On the material manifold $\mathcal{B}$ we define two matter fields: a reference mass density $\rho_\mathrm{ref}\in C(\overline{\mathcal{B}})$ such that $\inf_\mathcal{B} \rho_\mathrm{ref}>0$ and a 2-tensor field $\Sigma\in C^1(\overline{\mathcal{B}})$, called the {\it first} Piola-Kirchhoff stress tensor, satisfying 
\begin{equation}\label{eqSIGMA}
\mathrm{DIV}\,\Sigma\,(X)=G\rho_\mathrm{ref}(X)\int_\mathcal{B}\rho_\mathrm{ref}(Y)\frac{\psi(X)-\psi(Y)}{|\psi(X)-\psi(Y)|^3}\,dY,\quad X\in \mathcal{B}.
\end{equation}
Here $G=0$ when the self-induced gravitational force acting on the body is neglected, otherwise $G$ equals Newton's gravitational constant, see~\cite{BS,tomsimo}. The body is called elastic if $\Sigma(X)=\widehat{\Sigma}(X,F(X))$, for all $X\in\mathcal{B}$. For elastic bodies, the PDE~\eqref{eqSIGMA} transforms into a second order PDE on the deformation map, which has to be supplemented with appropriate boundary conditions, depending on the specific problem under study. 
For instance, the zero-traction boundary conditions describing an isolated body surrounded by vacuum are the non-linear Neumann-type boundary conditions
\begin{equation}\label{bcmat}
\widehat{\Sigma}(X,F(X))\cdot N=0\quad\text{on } \partial\mathcal{B},
\end{equation}
where $N$ is the outward unit normal vector field on the boundary $\partial\mathcal{B}$,  see~\cite{BS}.
The mass density $\rho$ and the Cauchy stress tensor $\sigma$ at the point $x=\psi(X)$ in the physical (deformed) state of the body are given by
\begin{equation}\label{piola}
\rho(\psi(X))=\frac{\rho_\mathrm{ref}(X)}{\det F(X)},\quad
\sigma(\psi(X))=\left(\frac{\widehat{\Sigma}(X,F(X))\cdot F^T(X)}{\det F(X)}\right).
\end{equation}
The definition of $\rho$ ensures that the mass of any subregion of the body is preserved by the deformation, while the definition of $\sigma$ ensures that~\eqref{generaleq} is satisfied for $x\in\Omega:=\psi(\mathcal{B})$. Moreover, provided the boundary of $\Omega$ is sufficiently regular, so that the normal field $n$ on $\partial\Omega$ is well defined, then the boundary conditions~\eqref{bcmat} transform into $\sigma\cdot n=0$, which in the spherically symmetric case is equivalent to the vanishing of the radial pressure on the physical boundary. 


An elastic body is called homogeneous if $\rho_\mathrm{ref}(X)=\mathcal{K}>0$ (constant) and $\Sigma(X)=\widehat{\Sigma}(F(X))$. An homogeneous elastic body is said to be isotropic if $\widehat{\Sigma}(A\cdot F)=\widehat{\Sigma}(F)$, for all $A\in SO(3)$, and frame indifferent if $\widehat{\Sigma}(A\cdot F)=A\cdot\widehat{\Sigma}(F)$, for all $A\in SO(3)$. In the following, an elastic body which is homogeneous, isotropic and frame indifferent will be called perfect elastic body for short.

The reference state of a perfect elastic body is called a natural state if $\widehat{\Sigma}(\mathbb{I})=0$, where $\mathbb{I}$ denotes the $3\times 3$ identity matrix, otherwise the reference state is said to be pre-stressed. 
  
An homogeneous elastic body is called hyperelastic if there exists a function $W$ such that $\widehat{\Sigma}(F)=\partial_FW(F)$. The function $W$ is called stored energy function of the body; without loss of generality, it is assumed that $W(\mathbb{I})=0$. 
In the case of perfect hyperelastic bodies the stored energy function can be written in the form $W(F)=\Phi(\lambda_1,\lambda_2,\lambda_3)$, where $\lambda_i$, $i=1,2,3$, are the principal stretches, that is to say, $\lambda_1^2,\lambda_2^2,\lambda_3^2$ are the eigenvalues of the (right) Cauchy-Green tensor
\[
C=F^T\cdot F. 
\]
For consistency with linear elasticity (i.e., Hooke's law) in the infinitesimal strain limit, the stored energy function $W(F)=\Phi(\lambda_1,\lambda_2,\lambda_3)$ of hyperelastic bodies should satisfy
\begin{equation}\label{lincom}
\frac{\partial^2\Phi}{\partial\lambda_i\partial\lambda_j}(1,1,1)=\lambda+2\mu\delta_{ij}.
\end{equation}
for some constants $\lambda,\mu$, called Lam\'e coefficients~\cite{ogden}


\subsection{Examples of elastic models}
In the following examples we denote $J=\det F=\det\nabla\psi$, and use the Almansi strain tensor $A$ and the Green strain tensor $E$  defined as
\[
A=\frac{1}{2}(\mathbb{I}-C^{-T}),\quad E=\frac{1}{2}(C-\mathbb{I}).
\]
The principal invariants of the Cauchy-Green tensor  are given by
\begin{align*}
i_1 (C) &=\text{Tr}(C) =\lambda^2_1+\lambda^2_2+\lambda^2_3, \\ 
i_2 (C)&= \frac{1}{2}\left[(\text{Tr}(C))^2-\text{Tr}(C^2)\right]  =\lambda^2_1\lambda^2_2+\lambda^2_1\lambda^2_3+\lambda^2_2\lambda^2_3, \\ 
i_3 (C)&= \det{(C)}=J^2= \lambda^2_1 \lambda^2_2 \lambda^2_3, \\ 
\end{align*}
while for the Almansi tensor we have
\begin{align*}
i_1(A)&=\frac{1}{2}\left[3-\left(\frac{1}{\lambda_1^2}+\frac{1}{\lambda_2^2}+\frac{1}{\lambda_3^2}\right)\right],\\
i_2(A)&=\frac{1}{4}\left(\frac{1}{\lambda_1^2\lambda_2^2}+\frac{1}{\lambda_1^2\lambda_3^2}+\frac{1}{\lambda_2^2\lambda_3^2}+\frac{1}{\lambda_1^2}+\frac{1}{\lambda_2^2}+\frac{1}{\lambda_3^2}\right);
\end{align*}
we shall not need the expression of $i_3(A)$.
\subsubsection*{Seth model}
The Seth model postulates a linear relation between the Cauchy-stress tensor and the Almansi tensor, that is  
\begin{equation}\label{cauchyseth}
\sigma(\psi(X))= \lambda i_1(A(X))\mathbb{I}+2\mu A(X),
\end{equation}
where the Lam\'e coefficients satisfy $\mu>0$, $3\lambda+2\mu>0$.
It follows by~\eqref{piola} that the first Piola-Kirchoff stress tensor of Seth materials is given by
\[
\widehat{\Sigma}(F)=J( \lambda i_1(A)\mathbb{I}+2\mu A)\cdot F^{-T}.
\]
The Seth model is not hyperelastic.
\subsubsection*{Saint-Venant-Kirchhoff model}
The Saint-Venant-Kirchhoff model postulates a linear relation between the {\it second} Piola-Kirchhoff stress tensor and the Green strain tensor, namely
\[
F^{-1}\cdot\widehat{\Sigma}(F)=\lambda\mathrm{Tr}(E)\mathbb{I}+2\mu E.
\]
This model is hyperelastic with stored energy function given by
\begin{align*}
 W(F) 
 =&\frac{\lambda+2\mu}{8}(i_1(C) -3)^2+\mu(i_1(C) -3)-\frac{\mu}{2}(i_2(C) -3) \nonumber\\
 =& \frac{\lambda}{8}(\sum^{3}_{i=1}\lambda^2_i -3)^2+\frac{\mu}{4}\sum^{3}_{i=1} (\lambda^2_i-1)^2,
\end{align*}
where the Lam\'e coefficients satisfy $\mu>0$, $3\lambda+2\mu>0$.
\subsubsection*{Quasi-linear Signorini model}
The first Piola-Kirchoff stress tensor of the (quasi-linear) Signorini model is given by
\[
\widehat{\Sigma}(F)=J S(A)\cdot F^{-T},\quad\text{i.e.,}\quad \sigma(\psi(X))=S(A(X)),
\]
where $S(A)$ is given in terms of the Almansi tensor by
\[
S(A)=[\lambda i_1(A)+\frac{1}{2}(\lambda+\mu)i_1(A)^2]\mathbb{I} +2[\mu -(\lambda+\mu)i_1(A)]A,
\]
where the Lam\'e coefficients $\lambda,\mu$ satisfy $\mu>0$, $9\lambda+5\mu>0$.
This model is hyperelastic with stored energy function given by
\begin{align}
W(F)&=J[\frac{1}{2}(\lambda+\mu)i_1(A)^2+\mu(1-i_1(A))]-\mu\nonumber\\
&=\frac{1}{8}\lambda_1\lambda_2\lambda_3\left[4\mu\left(\sum_{i=1}^3\lambda_i^{-2}-1\right)+(\lambda+\mu)\left(\sum_{i=1}^3\lambda_i^{-2}-3\right)^2\right]
-\mu.\label{wsig}
\end{align}


\subsubsection*{Hadamard model}
The stored energy function of Hadamard materials has the form
\begin{align}
W(F)&=\frac12(\alpha(i_1(C)-3)+\beta(i_2(C)-3)+h(i_3(C))-h(1))\nonumber\\
&=\frac{1}{2}(\alpha(\lambda_1^2+\lambda_2^2+\lambda_3^2-3)+\beta(\lambda_1^2\lambda_2^2+\lambda_1^2\lambda_3^2+\lambda_2^2\lambda_3^2-3)
+h(\lambda_1^2\lambda_2^2\lambda_3^2)-h(1)),\label{whad}
\end{align}
where $\alpha,\beta$ are constant and $h$ is a function.
The natural state condition $\partial_FW(\mathbb{I})=0$ and the compatibility conditions~\eqref{lincom} with linear elasticity are satisfied if and only if 
\[
h'(1)=-(\alpha+2\beta),\quad h''(1)=\frac{\lambda+2\mu}{2},\quad \alpha+\beta=\mu.
\] 
\subsubsection*{Incompressible materials}
All hyperelastic materials considered above are compressible. By adding the constraint $J=\lambda_1\lambda_2\lambda_3=1$ in the stored energy function, one obtains the incompressible version of each material. 
For instance, the stored energy function of the incompressible Hadamard materials is
\begin{equation}\label{wmr}
W(F)=\frac{1}{2}(\alpha(\lambda_1^2+\lambda_2^2+\lambda_3^2-3)+\beta(\lambda_1^2\lambda_2^2+\lambda_1^2\lambda_3^2+\lambda_2^2\lambda_3^2-3),\quad \lambda_1\lambda_2\lambda_3=1.
\end{equation}
Materials with the stored energy function~\eqref{wmr} are also called Mooney-Rivlin materials, see~\cite{BH}.


\subsection{Spherically symmetric elastic bodies}\label{ssem}
In this and the next section we show that the definition of homogeneous elastic bodies in Lagrangian coordinates is formally equivalent, in spherical symmetry, to Definition~\ref{elasticdef} for balls and Definition~\ref{elasticdefshell} for shells in physical space.

A deformation is spherically symmetric if there exists a function $\xi:[0,\infty)\to \R$ such that 
\[
\psi(X)=\xi(R)\frac{X}{R},\quad \text{where $R=|X|$.}
\]      
Moreover the material manifold $\mathcal{B}$ is either a ball of radius $\mathcal{T}>0$, i.e.,
\[
\mathcal{B}=\{X\in\R^3: 0\leq R<\mathcal{T}\},
\]
or a shell with inner radius $\mathcal{S}$ and outer radius $\mathcal{T}$, i.e., 
\[
\mathcal{B}=\{X\in\R^3: \mathcal{S}<R<\mathcal{T}\}.
\]
When $\mathcal{B}$ is a ball, regularity of $\psi$ up to the center requires $\xi(0)=0$, i.e., a ball cannot be deformed to a shell by a smooth deformation.

The deformation gradient $F=\nabla\psi$ has the form
\begin{equation}\label{formF}
F_{IJ}(X)=\frac{n(R)}{\tau(R)^2}\frac{X_IX_J}{R^2}+\tau(R)\left(\delta_{IJ}-\frac{X_IX_J}{R^2}\right),
\end{equation}
where
\begin{equation}\label{nR}
\tau(R)=\frac{\xi(R)}{R},\quad n(R):=\det F(R)=\tau(R)^2\xi'(R).
\end{equation}
The requirement $\det F>0$ forces the radial deformation $\xi$ to be a monotonically increasing function; in particular $\xi$ is invertible. For homogeneous materials, the mass density in physical space becomes
\[
\rho(\psi(X))=\frac{\mathcal{K}}{n(R)}.
\]
Hence, denoting $x=\psi(X)$ and $r=|x|$, we have
\begin{equation}\label{rhor}
\rho(r)=\frac{\mathcal{K}}{n(\xi^{-1}(r))}.
\end{equation}
The most general spherically symmetric Piola-Kirchhoff  stress tensor $\Sigma$ has the form
\[
\Sigma_{IJ}(X)=-P_\mathrm{rad}(R)\frac{X_IX_J}{R^2}-P_\mathrm{tan}(R)\left(\delta_{IJ}-\frac{X_IX_J}{R^2}\right),
\]
for some functions $P_\mathrm{rad}$, $P_\mathrm{tan}$. In particular, spherically symmetric bodies are necessarily isotropic and frame indifferent. 
For homogeneous elastic bodies, the principal pressures $P_\mathrm{rad}$ and $P_\mathrm{tan}$ must be functions of the deformation gradient, and therefore can be written as
\[
P_\mathrm{rad}(R)=\widehat{P}_\mathrm{rad}(n(R),\tau(R)),\quad P_\mathrm{tan}(R)=\widehat{P}_\mathrm{tan}(n(R),\tau(R)).
\]
Substituting in~\eqref{piola} we obtain the stress tensor at the point $\psi(X)$ in physical space: 
\begin{equation}\label{cazzarola}
\sigma_{IJ}(\psi(X))=-\frac{\widehat{P}_\mathrm{rad}(n(R),\tau(R))}{\tau(R)^2}\frac{X_IX_J}{R^2}-\frac{\tau(R)}{n(R)}\widehat{P}_\mathrm{tan}(n(R),\tau(R))\left(\delta_{IJ}-\frac{X_IX_J}{R^2}\right).
\end{equation}
Using $\psi^{-1}(x)/\xi^{-1}(r)=x/r$ we obtain
\[
\sigma_{ij}(x)=-p_\mathrm{rad}(r)\frac{x_ix_j}{r^2}-p_\mathrm{tan}(r)\left(\delta_{ij}-\frac{x_ix_j}{r^2}\right),
\]
where 
\begin{align*}
&p_\mathrm{rad}(r)=\tau(\xi^{-1}(r))^{-2}\widehat{P}_\mathrm{rad}(n(\xi^{-1}(r)),\tau(\xi^{-1}(r))),\\
&p_\mathrm{tan}(r)=\frac{\tau(\xi^{-1}(r)}{n(\xi^{-1}(r))}\widehat{P}_\mathrm{tan}(n(\xi^{-1}(r)),\tau(\xi^{-1}(r))). 
\end{align*}
As $n(\xi^{-1}(r))=\mathcal{K}/\rho(r)$ and $\tau(\xi^{-1}(r))=r/\xi^{-1}(r)$, we find that
\[
p_\mathrm{rad}(r)=\widehat{p}_\mathrm{rad}(\delta(r),\eta(r)),\quad p_\mathrm{tan}(r)=\widehat{p}_\mathrm{tan}(\delta(r),\eta(r)),
\]
where 
\begin{equation}\label{deltaeta}
\delta(r)=\frac{\rho(r)}{\mathcal{K}},\quad \eta(r)=\left(\frac{\xi^{-1}(r)}{r}\right)^3,
\end{equation}
and
\begin{equation}\label{cazzacina}
\widehat{p}_\mathrm{rad}(\delta,\eta)=\eta^{2/3}\widehat{P}_\mathrm{rad}(\delta^{-1},\eta^{-1/3}),\quad \widehat{p}_\mathrm{tan}(\delta,\eta)=\frac{\delta}{\eta^{1/3}}\widehat{P}_\mathrm{tan}(\delta^{-1},\eta^{-1/3}).
\end{equation}
Moreover
\[
\eta'(r)=-3\frac{\eta(r)}{r}+3\eta(r)^{2/3}\frac{(\xi^{-1})'(r)}{r}=-3\frac{\eta(r)}{r}+3\eta(r)^{2/3}(r\xi'(\xi^{-1}(r))^{-1}.
\]
From~\eqref{nR} we have $\xi'(\xi^{-1}(r))=\eta(r)^{2/3}n(\xi^{-1}(r))$, hence, by~\eqref{rhor},
\begin{equation}\label{eqeta}
(r^3\eta(r))'=3r^2\frac{\rho(r)}{\mathcal{K}},
\end{equation}
When the body in physical space is a ball with radius $r_1$, then~\eqref{eqeta} gives
\begin{equation}\label{etaball}
\eta(r)=\frac{3\int_0^r\rho(s)s^2\,ds}{\mathcal{K}r^3}=\frac{m(r)}{\frac{4\pi}{3}\mathcal{K}r^3}\quad r\in (0,r_1),
\end{equation}
where $m(r)$ is the local mass of the ball in physical space. 
When the body in physical space is a shell supported in the interval $[r_0,r_1]$, $r_0>0$, then
\[
\eta(r)=\eta(r_0)\left(\frac{r_0}{r}\right)^3+\frac{3\int_{r_0}^r\rho(s)s^2\,ds}{\mathcal{K}r^3}=\eta(r_0)\left(\frac{r_0}{r}\right)^3+\frac{m(r)}{\frac{4\pi}{3}\mathcal{K}r^3}\quad r\in (r_0,r_1),
\]
where $m(r)$ is the local mass of the shell in physical space. Having assumed that a shell in physical space can only form when the material manifold itself is a shell, then $r_0=\xi(\mathcal{S})$, hence
\begin{equation}\label{etashell}
\eta(r_0)=\left(\frac{\xi^{-1}(r_0)}{r_0}\right)^3=\left(\frac{\mathcal{S}}{r_0}\right)^3\Rightarrow\eta(r)=\left(\frac{\mathcal{S}}{r}\right)^3+\frac{m(r)}{\frac{4\pi}{3}\mathcal{K}r^3},\quad r\in (r_0,r_1).
\end{equation}
{\it Remark.} Note that we do not allow $\mathcal{S}=0$, i.e., we disregard the possibility that a shell in physical space is obtained by deforming a ball in the reference state. This choice is motivated both on physical grounds, since gravity cannot create a cavity in a ball, and on mathematical grounds, since, as mentioned above, a deformation producing a shell from a ball would necessary be discontinuous at the center of the ball.

\subsubsection*{Example: Seth model} 
The Almansi strain tensor $A=\frac{1}{2}(\mathbb{I}-C^{-T})$ in spherical symmetry becomes
\begin{align*}
A_{IJ}&=\frac{1}{2}(\delta_{IJ}-(F^{-1})^2_{IJ})\\
&=\frac{1}{2}\left(1-\frac{\tau(R)^4}{n(R)^2}\right)\frac{X_IX_J}{R^2}+\frac{1}{2}\left(1-\frac{1}{\tau(R)^2}\right)\left(\delta_{IJ}-\frac{X_IX_J}{R^2}\right).
\end{align*}
Hence
\[
i_1(A)=\frac{1}{2}\left(3-\frac{\tau(R)^4}{n(R)^2}-\frac{2}{\tau(R)^2}\right).
\]
Replacing in~\eqref{cauchyseth} we have
\begin{align*}
\sigma(\psi(X))&=\left(\frac{3\lambda+2\mu}{2}-\frac{\lambda+2\mu}{2}\frac{\tau(R)^4}{n(r)^2}-\frac{\lambda}{\tau(R)^2}\right)\frac{X_IX_J}{R^2}\\
&\quad+\left(\frac{3\lambda+2\mu}{2}-\frac{\lambda+\mu}{\tau(R)^2}-\frac{\lambda}{2}\frac{\tau(R)^4}{n(R)^2}\right)\left(\delta_{IJ}-\frac{X_IX_J}{R^2}\right).
\end{align*}
Comparing with~\eqref{cazzarola} we see that
\begin{align*}
&\widehat{P}_\mathrm{rad}(n,\tau)=\lambda+\frac{\lambda+2\mu}{2}\frac{\tau^6}{n^2}-\frac{3\lambda+2\mu}{2}\tau^2,\\
&\widehat{P}_\mathrm{tan}(n,\tau)=\frac{\lambda}{2}\frac{\tau^3}{n}+(\lambda+\mu)\frac{n}{\tau^3}-\frac{3\lambda+2\mu}{2}\frac{n}{\tau}.
\end{align*}
Using the preceding equations in~\eqref{cazzacina} leads to the equations of state~\eqref{sethmodel} in physical space.
\subsection{Hyperelastic materials}
In this section we focus on the important case of homogeneous hyperelastic materials, which means that the the first Piola-Kirchhoff stress tensor can be written in the form
\[
\Sigma(X)=\partial_F W(F(X)),\quad \text{that is }\ \widehat{\Sigma}(F)=\partial_F W(F)
\]
for some function $W(F)$, called the stored energy function. As in spherical symmetry any material is frame indifferent and isotropic, we can restrict to a stored energy function of the form 
\[
W(F)=\Phi(\lambda_1,\lambda_2,\lambda_3)
\]
where $\lambda_1,\lambda_2,\lambda_3>0$ are the principal stretches, i.e., $\lambda_1^2,\lambda_2^2,\lambda_3^2$ are the eigenvalues of the Cauchy-Green tensor $C=F^T\cdot F$. Moreover in spherical symmetry we have 
\[
C_{IJ}(X)=\frac{n(R)^2}{\tau(R)^4}\frac{X_IX_J}{R^2}+\tau(R)^2\left(\delta_{IJ}-\frac{X_IX_J}{R^2}\right),
\]
hence the principal stretches are 
\[
\lambda_1=\frac{n(R)}{\tau(R)^2},\quad \lambda_2=\lambda_3=\tau(R)
\]
and therefore we can restrict to stored energy functions of the form
\begin{equation}\label{formW}
W(F)=\Phi(\lambda_1,\lambda_2,\lambda_2)=\phi(\lambda_1,\lambda_2).
\end{equation}
Note that $\lambda_1,\lambda_2=\lambda_3$ are the eigenvalues of $F$ and we can rewrite~\eqref{formF} as
\begin{equation}\label{formF2}
F_{IJ}(X)=\lambda_1(R)\frac{X_IX_J}{R^2}+\lambda_2(R)\left(\delta_{IJ}-\frac{X_IX_J}{R^2}\right).
\end{equation}
\begin{lemma}
The first Piola-Kirchhoff stress tensor of homogeneous hyperelastic materials in spherical symmetry is given by
\[
\Sigma(X)=-P_\mathrm{rad}(R)\frac{X_IX_J}{R^2}-P_\mathrm{tan}(R)\left(\delta_{IJ-}\frac{X_IX_J}{R^2}\right),
\]
where 
\[
P_\mathrm{rad}(R)=-\partial_1\phi(\lambda_1(R),\lambda_2(R)),\quad P_\mathrm{tan}(R)=-\frac{1}{2}\partial_2\phi(\lambda_1(R),\lambda_2(R)).
\]
\end{lemma}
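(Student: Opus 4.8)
The plan is to obtain the asserted formula from the classical expression for the first Piola--Kirchhoff stress of an isotropic hyperelastic material in terms of its principal stretches and directions, and then to specialize it to the spherically symmetric deformation gradient. Recall from \cite{ogden} that if $W(F)=\Phi(\lambda_1,\lambda_2,\lambda_3)$ with $\Phi$ symmetric in its arguments (isotropy) and invariant under left rotations (frame indifference), then
\[
\partial_FW(F)=\sum_{i=1}^{3}\frac{\partial\Phi}{\partial\lambda_i}(\lambda_1,\lambda_2,\lambda_3)\,v_i\otimes u_i ,
\]
where $F=\sum_i\lambda_i\,v_i\otimes u_i$ is a singular value decomposition. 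By~\eqref{formF2} the spherically symmetric $F$ is symmetric and positive definite (since $\lambda_1=n/\tau^2>0$ and $\lambda_2=\lambda_3=\tau>0$), so its singular value decomposition is just its spectral decomposition: $u_i=v_i=e_i$, with $e_1=X/R$ carrying the simple eigenvalue $\lambda_1(R)$ and the orthogonal plane $e_1^\perp$ carrying the repeated eigenvalue $\lambda_2(R)=\lambda_3(R)$.

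The only subtle point is the repeated eigenvalue, since then the eigenvectors $e_2,e_3$ spanning $e_1^\perp$ are not uniquely determined; one must check that the above sum is nonetheless well defined. This holds because isotropy (symmetry of $\Phi$) forces $\partial_2\Phi=\partial_3\Phi$ on the diagonal $\lambda_2=\lambda_3$, so
\[
\frac{\partial\Phi}{\partial\lambda_2}\,e_2\otimes e_2+\frac{\partial\Phi}{\partial\lambda_3}\,e_3\otimes e_3=\frac{\partial\Phi}{\partial\lambda_2}\left(e_2\otimes e_2+e_3\otimes e_3\right)=\frac{\partial\Phi}{\partial\lambda_2}\left(\mathbb{I}-\frac{X\otimes X}{R^2}\right),
\]
which does not depend on the chosen orthonormal basis of $e_1^\perp$. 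Adding the $e_1$-term gives, in components,
\[
\big(\partial_FW(F)\big)_{IJ}=\frac{\partial\Phi}{\partial\lambda_1}\,\frac{X_IX_J}{R^2}+\frac{\partial\Phi}{\partial\lambda_2}\left(\delta_{IJ}-\frac{X_IX_J}{R^2}\right),
\]
with all derivatives evaluated at $(\lambda_1(R),\lambda_2(R),\lambda_2(R))$.

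It remains to re-express the derivatives of $\Phi$ through those of $\phi(\lambda_1,\lambda_2)=\Phi(\lambda_1,\lambda_2,\lambda_2)$ from~\eqref{formW}. The chain rule gives $\partial_1\phi=\partial_1\Phi$ and $\partial_2\phi=\partial_2\Phi+\partial_3\Phi=2\,\partial_2\Phi$, the last equality again using $\partial_2\Phi=\partial_3\Phi$ on the diagonal. Substituting $\partial_1\Phi=\partial_1\phi$ and $\partial_2\Phi=\tfrac12\,\partial_2\phi$ into the last display and matching it against the general spherically symmetric form $\Sigma_{IJ}=-P_\mathrm{rad}\,X_IX_J/R^2-P_\mathrm{tan}(\delta_{IJ}-X_IX_J/R^2)$ yields $P_\mathrm{rad}(R)=-\partial_1\phi(\lambda_1(R),\lambda_2(R))$ and $P_\mathrm{tan}(R)=-\tfrac12\,\partial_2\phi(\lambda_1(R),\lambda_2(R))$, as claimed. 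I expect the main obstacle to be exactly this repeated-eigenvalue bookkeeping --- in particular making transparent that the factor $\tfrac12$ in $P_\mathrm{tan}$ originates from differentiating the two coincident transverse stretches --- rather than any genuine analytic difficulty. As an alternative that avoids invoking the formula of \cite{ogden} as a black box, one may compute the first variation $\tfrac{d}{ds}W(F+sH)\big|_{s=0}=\partial_FW(F):H$ directly, using that for symmetric positive definite $F$ the $i$-th singular value varies as $\dot\lambda_i(0)=e_i\cdot He_i$ (the two transverse ones being handled together through the differentiable combination $\lambda_2+\lambda_3=\mathrm{tr}\,F-e_1\cdot Fe_1$); this reproduces the same formula.
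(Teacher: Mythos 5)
Your proof is correct, but it follows a genuinely different route from the paper's. The paper stays entirely inside the spherically symmetric class: it writes $\Sigma=(\partial_1\phi)\,\partial_F\lambda_1+(\partial_2\phi)\,\partial_F\lambda_2$, obtains the two relations $\tfrac{1}{\lambda_1}\partial_F\lambda_1+\tfrac{2}{\lambda_2}\partial_F\lambda_2=F^{-1}$ and $\partial_F\lambda_1+2\,\partial_F\lambda_2=\mathbb{I}$ by differentiating $\det F=\lambda_1\lambda_2^2$ and $\mathrm{Tr}\,F=\lambda_1+2\lambda_2$, solves this linear system, and reads off $\partial\lambda_1/\partial F_{IJ}=X_IX_J/R^2$, $\partial\lambda_2/\partial F_{IJ}=\tfrac12\bigl(\delta_{IJ}-X_IX_J/R^2\bigr)$ from the explicit form of $F^{-1}$ --- an elementary, self-contained matrix-calculus computation, though it treats $\lambda_1,\lambda_2$ as differentiable functions of $F$ without comment on the degenerate transverse stretch. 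You instead invoke the classical Ogden representation $\partial_FW=\sum_i\partial_{\lambda_i}\Phi\,v_i\otimes u_i$, observe that for the symmetric positive definite spherically symmetric $F$ the singular value decomposition is the spectral one, and handle the repeated eigenvalue via symmetry of $\Phi$ (so $\partial_2\Phi=\partial_3\Phi$ on $\lambda_2=\lambda_3$ and the transverse projector $\mathbb{I}-e_1\otimes e_1$ is basis-independent), then pass to $\phi$ by $\partial_2\phi=2\,\partial_2\Phi$. What each buys: the paper avoids any appeal to a representation theorem and needs only the derivatives of $\mathrm{Tr}$ and $\det$; your argument makes the repeated-eigenvalue bookkeeping explicit and shows transparently that the factor $\tfrac12$ in $P_\mathrm{tan}$ comes from the two coincident transverse stretches, at the cost of importing the Ogden formula (or your sketched first-variation computation) as external input. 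Note that your final projector coefficient $\tfrac12(\mathbb{I}-e_1\otimes e_1)$ is exactly the paper's solved $\partial\lambda_2/\partial F$, so the two derivations meet in the same place, and both match the lemma's signs upon comparison with $\Sigma_{IJ}=-P_\mathrm{rad}X_IX_J/R^2-P_\mathrm{tan}\bigl(\delta_{IJ}-X_IX_J/R^2\bigr)$.
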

\begin{proof}
We have
\begin{equation}\label{temporal}
\Sigma=(\partial_1\phi)\frac{\partial \lambda_1}{\partial F}+(\partial_2\phi)\frac{\partial \lambda_2}{\partial F}.
\end{equation}
Using that $\partial_A\det A=(\det A) A^{-1}$ and $\partial_A\mathrm{Tr}A=\mathbb{I}$ we obtain
\[
\frac{1}{\lambda_1}\frac{\partial \lambda_1}{\partial F}+\frac{2}{\lambda_2}\frac{\partial \lambda_2}{\partial F}=F^{-1},\quad
\frac{\partial \lambda_1}{\partial F}+2\frac{\partial \lambda_2}{\partial F}=\mathbb{I}.
\]
It follows that
\[
\frac{\partial \lambda_1}{\partial F}=-\frac{\lambda_1}{\lambda_2-\lambda_1}(\mathbb{I}-\lambda_2 F^{-1}),\quad \frac{\partial \lambda_1}{\partial F}=\frac{\lambda_2}{2(\lambda_2-\lambda_1)}(\mathbb{I}-\lambda_1 F^{-1}).
\]
As
\[
(F^{-1})_{IJ}=\frac{1}{\lambda_1}\frac{X_IX_J}{R^2}+\frac{1}{\lambda_2}\left(\delta_{IJ}-\frac{X_IX_J}{R^2}\right),
\]
we further have
\[
\frac{\partial \lambda_1}{\partial F_{IJ}}=\frac{X_IX_J}{R^2},\quad
\frac{\partial \lambda_2}{\partial F_{IJ}}=\frac{1}{2}\left(\delta_{IJ}-\frac{X_IX_J}{R^2}\right).
\]
Replacing in~\eqref{temporal} yields the result.
\end{proof}
As in spherical symmetry $F=F^T$ is given by~\eqref{formF2} and $\det F=\lambda_1\lambda_2^2$,~\eqref{piola} gives
%
\[
\sigma_{IJ}(\psi(X))=\frac{\partial_1\phi(\lambda_1(R),\lambda_2(R))}{\lambda_2(R)^2}\frac{X_IX_J}{R^2}+\frac{1}{2}\frac{\partial_2\phi(\lambda_1(R),\lambda_2(R))}{\lambda_1(R)\lambda_2(R)}\left(\delta_{IJ}-\frac{X_IX_J}{R^2}\right).
\]
Hence, using that $\psi^{-1}(x)/\xi^{-1}(r)=x/r$, we obtain
\[
\sigma_{ij}(x)=-p_\mathrm{rad}(r)\frac{x_ix_j}{r^2}-p_\mathrm{tan}(r)\left(\delta_{ij}-\frac{x_ix_j}{r^2}\right),
\]
where
\[
p_\mathrm{rad}(r)=-\frac{\partial_1\phi(\lambda_1(\xi^{-1}(r)),\lambda_2(\xi^{-1}(r))),}{\lambda_2(\xi^{-1}(r))^2},\quad
p_\mathrm{tan}(r)=-\frac{\partial_2\phi(\lambda_1(\xi^{-1}(r)),\lambda_2(\xi^{-1}(r)))}{2\lambda_1(\xi^{-1}(r))\lambda_2(\xi^{-1}(r))}.
\]
Defining $\delta(r)=\rho(r)/\mathcal{K}$, $\eta(r)=1/\tau(\xi^{-1}(r))^3=1/\lambda_2(\xi^{-1}(r))^3$ as in~\eqref{deltaeta}
and using that, by~\eqref{rhor},
\[
\lambda_1(\xi^{-1}(r))=\frac{\eta(r)^{2/3}}{\delta(r)},
\]
we obtain
\[
p_\mathrm{rad}(r)=\widehat{p}_\mathrm{rad}(\delta(r),\eta(r)),\quad \widehat{p}_\mathrm{tan}(r)=p_\mathrm{tan}(\delta(r),\eta(r)).
\]
where 
\[
\widehat{p}_\mathrm{rad}(\delta,\eta)=-\eta^{2/3}\partial_1\phi(\delta^{-1}\eta^{2/3},\eta^{-1/3}),\quad \widehat{p}_\mathrm{tan}(\delta,\eta)=-\frac{1}{2}\frac{\delta}{\eta^{1/3}}\partial_2\phi(\delta^{-1}\eta^{2/3},\eta^{-1/3}).
\]
Finally, letting 
\[
w(\delta,\eta)=\phi(\delta^{-1}\eta^{2/3},\eta^{-1/3})
\]
we obtain
\[
\widehat{p}_\mathrm{rad}(\delta,\eta)=\delta^2\partial_\delta w(\delta,\eta),\quad \widehat{p}_\mathrm{tan}(\delta,\eta)=\widehat{p}_\mathrm{rad}(\delta,\eta)+\frac{3}{2}\delta\eta\partial_\eta w(\delta,\eta).
\]
Moreover due to~\eqref{formW} and the compatibility conditions~\eqref{lincom} with linear elasticity, we have
\[
\partial_1^2\phi(1,1)=\lambda+2\mu,\quad\partial_2^2\phi(1,1)=4\lambda+4\mu,\quad \partial_1\partial_2\phi(1,1)=2\lambda.
\]
For materials satisfying the natural state condition~\eqref{naturalstatecon} the previous equations 
become~\eqref{com1}-\eqref{com2}.
\subsubsection*{Examples}

For the quasi-linear Signorini model in spherical symmetry we have
\begin{align*}
W(F)=\phi(\lambda_1,\lambda_2)=\frac{\lambda_1\lambda_2^2}{8}\left[4\mu\left(\frac{1}{\lambda_1^2}+\frac{2}{\lambda_2^2}-1\right)+(\lambda+\mu)\left(\frac{1}{\lambda_1^2}+\frac{2}{\lambda_2^2}-2\right)^2\right]-\mu.
\end{align*}
Hence $w(\delta,\eta) =\phi(\delta^{-1}\eta^{2/3},\eta^{-1/3})$ is given by
\begin{align*}
w(\delta,\eta) =\frac{1}{\delta}\left[\frac{1}{8} \left(\frac{\delta ^2}{\eta ^{4/3}}+2 \eta ^{2/3}-3\right)^2 (\lambda
   +\mu )+\frac{\mu}{2} \left(\frac{\delta ^2}{\eta ^{4/3}}+2 \eta
   ^{2/3}-1\right)\right]-\mu.
\end{align*}
Similarly one obtains the other stored energy functions presented in Section~\ref{examples}.
\section{Appendix: Variational formulation for hyperelastic materials}\label{varsec}
In this appendix we show that the equation~\eqref{generalss} for static, spherically symmetric,  self-gravitating hyperelastic bodies admits a variational formulation. It is convenient to define
\[
\Psi(\rho,\eta)=\mathcal{K}^{-1}w(\rho/\mathcal{K},\eta),
\] 
so that~\eqref{hyperdef} becomes
\begin{equation}\label{hyperdef2}
p_\mathrm{rad}(r)=\rho(r)^2\partial_\rho\Psi(\rho(r),\eta(r)),\quad p_\mathrm{tan}(r)=p_\mathrm{rad}(r)+\frac{3}{2}\rho(r)\eta(r)\partial_\eta\Psi(\rho(r),\eta(r)).
\end{equation}
Moreover we assume for simplicity that $r\in (0,\infty)$; a similar argument applies to the case $r\in (r_0,\infty)$ with $r_0>0$. Hence the function $\eta$ is given by
\[
\eta(r)=\frac{m(r)}{\frac{4\pi}{3}\mathcal{K}r^3},\quad m(r)=4\pi\int_0^r\rho(s)s^2\,ds,\quad r>0.
\]
Consider the functional
\[
E[\rho]=\int_0^\infty \rho(r) \Psi(\rho(r),\eta(r)) r^2\,dr-\frac{1}{8\pi}\int_0^\infty \left(\frac{m(r)}{r^2}\right)^2 r^2\,dr.
\]
If $\rho_0$ is a minimizer and we set $\rho_\tau(r)=\rho_0(r)+\tau\phi(r)$, where $\phi$ is a smooth function with compact support, we obtain
\begin{align*}
0=\left(\frac{d}{d\tau}E[\rho_\tau]\right)_{\tau=0}&=\int_0^\infty\partial_\rho(\rho \Psi)(\rho_0(r),\eta_0(r))\phi(r) r^2\,dr\\
&\quad+\int_0^\infty \frac{3\rho_0(r)}{\mathcal{K}}\partial_\eta \Psi(\rho_0(r),\eta_0(r))\left(\frac{1}{r^2}\int_0^r s^2\phi(s)\,ds\right)r\,dr\\
&\quad-\int_0^\infty\frac{m_0(r)}{r^2}\left(\int_0^r s^2\phi(s)\,ds\right)dr.
\end{align*}
Exchanging the order of integration in the last integral we find
\[
\int_0^\infty\frac{m_0(r)}{r^2}\left(\int_0^r s^2\phi(s)\,ds\right)\,dr=-\int_0^\infty  V_0(s)\phi (s) s^2 ds,
\]
where $V_0(s)=-\int_s^\infty m_0(r)/r^2\,dr$ is the potential induced by the minimizer. Similarly the second integral is
\[
\int_0^\infty \frac{3\rho_0(r)}{\mathcal{K}}\partial_\eta \Psi(\rho_0(r),\eta_0(r))\left(\frac{1}{r^2}\int_0^r s^2\phi(s)\,ds\right)r^2\,dr=\int_0^\infty\Lambda_0(s)\phi(s)s^2\,ds,
\]
where
\[
\Lambda_0(s)=\int_s^\infty\frac{3\rho_0(r)}{\mathcal{K}}\partial_\eta \Psi(\rho_0(r),\eta_0(r))\,\frac{dr}{r}.
\]
Since $\phi$ is arbitrary it must hold that $\partial_\rho(\rho \Psi)(\rho_0(r),\eta_0(r))+\Lambda_0(r)+V_0(r)=0$.
Differentiating with respect to $r$ we obtain
\begin{equation}\label{agua}
\frac{d}{dr}\partial_\rho(\rho \Psi)(\rho_0(r),\eta_0(r))-\frac{3}{r}\frac{\rho_0(r)}{\mathcal{K}}\partial_\eta \Psi(\rho_0(r),\eta_0(r))+\frac{d}{dr}V_0(r)=0.
\end{equation}
Letting $p_\mathrm{rad}^0(r)=\rho_0(r)^2\partial_\rho \Psi(\rho_0(r),\eta_0(r))$ be the radial pressure of the minimizer and using that 
\[
\eta'(r)=\frac{3}{r}\big(\frac{\rho(r)}{\mathcal{K}}-\eta(r)\big),
\]
we have
\begin{align*}
\frac{d}{dr}[\partial_\rho(\rho \Psi)(\rho_0(r),\eta_0(r))]&=-3\frac{\eta_0(r)}{r}\partial_\eta \Psi(\rho_0(r),\eta_0(r))+\frac{3}{r}\frac{\rho_0(r)}{\mathcal{K}}\partial_\eta \Psi(\rho_0(r),\eta_0(r))\\
&\quad+\frac{1}{\rho_0(r)}\frac{d}{dr}p_\mathrm{rad}^0(r).
\end{align*}

Replacing in~\eqref{agua} and using $\frac{d}{dr}V_0(r)=m_0(r)/r^2$, as well as $p_\mathrm{rad}-p_\mathrm{tan}=-\frac{3}{2}\rho\eta\partial_\eta \Psi$, we obtain
\[
\frac{1}{\rho_0(r)}\frac{d}{dr}p_\mathrm{rad}^0(r)+\frac{2}{r}\frac{p_\mathrm{rad}^0(r)-p_\mathrm{tan}^0(r)}{\rho_0(r)}+\frac{m_0(r)}{r^2}=0,
\]
which is~\eqref{generalss} inside the support of the matter. 

\end{appendices}

\end{document}